\newcommand{\Send}{S_{\text{end}}}
\newcommand{\Sstart}{S_{\text{start}}}
\newcommand{\Sgraph}{S_{\text{graph}}}
\newcommand{\Var}{\mathrm{Var}}
\newcommand{\diff}{\mathop{}\!\mathrm{d}}
\newcommand{\covs}[2]{\text{Cov}[#1, #2]}
\newcommand{\trans}{^\mathsf{T}}
\newtheorem{claim}{Claim}
\newtheorem{theorem}{Theorem}
\newcolumntype{C}[1]{>{\centering\let\newline\\\arraybackslash\hspace{0pt}}m{#1}}
\newcolumntype{R}[1]{>{\raggedleft\let\newline\\\arraybackslash\hspace{0pt}}m{#1}}
\renewcommand*\env@matrix[1][*\c@MaxMatrixCols c]{%
  \hskip -\arraycolsep
  \let\@ifnextchar\new@ifnextchar
  \array{#1}}
\begin{document}

\singlespacing

\title{Efficient quantum measurement of Pauli operators \newline in the presence of finite sampling error
}

\author{Ophelia Crawford$^* $}
\author{Barnaby van Straaten$^* $}
\affiliation{Riverlane, Cambridge, UK}
\author{Daochen Wang$^* $}
\affiliation{Riverlane, Cambridge, UK}
\affiliation{Joint Center for Quantum Information and Computer Science, University of Maryland, College Park, USA}
\author{Thomas Parks}
\affiliation{Riverlane, Cambridge, UK}
\author{Earl Campbell}
\affiliation{Riverlane, Cambridge, UK}
\affiliation{Department of Physics and Astronomy, University of Sheffield, Sheffield, UK}
\author{Stephen Brierley}
\affiliation{Riverlane, Cambridge, UK}

\begin{abstract}
{Estimating the expectation value of an operator corresponding to an observable is a fundamental task in quantum computation. It is often impossible to obtain such estimates directly, as the computer is restricted to measuring in a fixed computational basis. One common solution splits the operator into a weighted sum of Pauli operators and measures each separately, at the cost of many measurements. An improved version collects mutually commuting Pauli operators together before measuring all operators within a collection simultaneously. The effectiveness of doing this depends on two factors. Firstly, we must understand the improvement offered by a given arrangement of Paulis in collections. In our work, we propose two natural metrics for quantifying this, operating under the assumption that measurements are distributed optimally among collections so as to minimise the overall finite sampling error. Motivated by the mathematical form of these metrics, we introduce S\scriptsize ORTED \normalsize I\scriptsize NSERTION\normalsize, a collecting strategy that exploits the weighting of each Pauli operator in the overall sum. Secondly, to measure all Pauli operators within a collection simultaneously, a circuit is required to rotate them to the computational basis. In our work, we present two efficient circuit constructions that suitably rotate any collection of $\boldsymbol{k}$ independent commuting $\boldsymbol{n}$-qubit Pauli operators using at most  $\boldsymbol{kn-k(k+1)/2}$ and $\boldsymbol{O(kn/\log k)}$ two-qubit gates respectively. Our methods are numerically illustrated in the context of the Variational Quantum Eigensolver, where the operators in question are molecular Hamiltonians. As measured by our metrics, S\scriptsize ORTED \normalsize I\scriptsize NSERTION \normalsize outperforms four conventional greedy colouring algorithms that seek the minimum number of collections.}
\end{abstract}

\maketitle
\def\thefootnote{*}\footnotetext{These authors contributed equally to this work.}

\section{Introduction}

Estimating the expectation value of an operator corresponding to an observable on a quantum state is a fundamental task in quantum mechanical experiments. Expectation estimation of a Hamiltonian features prominently as the quantum sub-routine of the Variational Quantum Eigensolver (VQE) algorithm~\cite{peruzzo2014variational}, which has emerged as a leading candidate for exhibiting quantum advantage in the Noisy Intermediate Scale Quantum era~\cite{preskill2018quantum}. VQE is a hybrid quantum-classical algorithm designed to find the ground state~\cite{peruzzo2014variational, mcclean2016theory, omalley2016scalable, kandala2017_hardware, mcardle_quantum_2018, ryabinkin_constrained_2018, romero2018strategies, wang2018generalised}, or energy spectrum~\cite{mcclean2017hybrid,santagati2018witnessing,colless2018computation,heya_subspace_2019,jones_variational_2019,Higgott2019variationalquantum}, of a physical or chemical system. Often there is no natural way to obtain the expectation of an operator directly and some indirect protocol is required. In particular, this is true of current quantum computers, which can only obtain measurements in the computational basis defined, by convention, as eigenstates of the Pauli-$Z$ operator on each qubit.

One naive method of proceeding, therefore, is to decompose the operator into a weighted sum of Pauli operators (or Paulis) and then measure each separately. The paper that introduced VQE~\cite{peruzzo2014variational} proposed measuring the Hamiltonian in this way. However, this can be inefficient. For example, a second-quantised chemical Hamiltonian on $n$ qubits decomposes into a very large number of Paulis that scales as $n^{4}$. An improved method, therefore, is to arrange the Paulis in commuting collections. All Paulis in a collection can be measured at the same time, as any set of commuting Paulis can be simultaneously diagonalised by a single unitary. There are typically many arrangements of Paulis in mutually commuting collections, and the reduction in the number of measurements required will depend on the arrangement.

In the context of VQE, \citeauthor{mcclean2016theory}~\cite{mcclean2016theory} proposed this improved protocol and argued using a toy example that the arrangement with the fewest collections might not result in the fewest measurements as splitting a single commuting collection into two might sometimes offer an improvement. However, Ref.~\cite{mcclean2016theory} did not show how to obtain commuting collections nor how to construct the unitary rotation, $U$, that enables simultaneous measurement.

Recently, a series of papers~\cite{izmaylov2019_minclique, mosca2019_pauli_partioning, izmaylov2019_general_commute, babbush2019_chem_commute, gokhale2019_commute, zhao_2019, gokhale_n3}\footnote{We mention that both the collecting and rotation circuit synthesis problems have been addressed on an ad-hoc basis by experimentalists since~\citeauthor{kandala2017_hardware}~\cite{kandala2017_hardware}. We refer readers to Table~2 of Ref.~\cite{gokhale2019_commute} for a good summary. We also mention that recent Ref.~\cite{2019izmaylov_anticommute} and the less recent Ref.~\cite{izmaylov2018_revising} allow for collecting of non-commuting Paulis. We found it interesting that such approaches are feasible, but found it hard to compare them to our work on a like-for-like basis.} have appeared that make progress on both the collecting strategy and rotation circuit construction problems. Our paper is in the same arena and addresses \textit{both} problems.

First, we define two natural metrics, $R$ and $\hat{R}$, that quantify the performance of any given arrangement. $R$ and $\hat{R}$ measure the ratio of the number of measurements required in the uncollected case to the collected case to attain a fixed level of accuracy. The key feature of these two metrics is that they assume measurements are distributed optimally between the collections to minimise the finite sampling error, following Refs~\cite{wecker2015progress, Rubin_2018,romero2018strategies}. The difference between them is that $R$ is state-dependent but $\hat{R}$ is designed to approximate $\mathbb{E}[R]$ over the uniform spherical measure. Therefore, $R$ is more suitable for use given knowledge of the underlying quantum state, while $\hat{R}$ is more suitable otherwise.

With $R$ and $\hat{R}$ defined, we prove a first result that breaking a single commuting collection into two is never advantageous, as the number of measurements required to obtain an expectation estimate to a given accuracy is never reduced. This result already contradicts the conclusion of the aforementioned toy example used by~\citeauthor{mcclean2016theory}~\cite{mcclean2016theory}, and analysed in full in Ref.~\cite[Sec.~10.1]{gokhale2019_commute}, that breaking a collection can be advantageous. The reason for the discrepancy is that we distribute measurements optimally among the collections so as to minimise the overall error due to finite sampling for a given number of measurements. In contrast, in the previous works, measurements are distributed uniformly between the collections.

We further propose an intrinsically new collecting strategy, which we call \textsc{Sorted Insertion}, that is designed to maximise $R$ and $\hat{R}$. Unlike all strategies used previously that seek the minimum number of collections~\cite{izmaylov2019_minclique, mosca2019_pauli_partioning,gokhale2019_commute,zhao_2019}, \textsc{Sorted Insertion} seeks to maximise $\hat{R}$ by explicitly exploiting the coefficients of the Paulis in its assignment of collections. It is important to stress that maximising $\hat{R}$ is not the same as minimising the number of collections, as we show by a toy example. Perhaps counter-intuitively, this does not contradict the above conclusion, which showed only that it is never better to break a single collection into two.

When considering the rotation circuit construction problem, we contribute two new methods for constructing Clifford rotation circuits that enable simultaneous measurement of a collection containing arbitrary commuting Paulis. Like Ref.~\cite{gokhale2019_commute}, we approach the problem via the stabiliser formalism but further consider the case where the number of independent operators, $k$, in a collection can be less than the number of qubits, $n$. We show that the number of two-qubit gates in the rotation circuit, $U$, can be reduced in a way that scales with $k$. This is important because it is atypical for actual collections to have exactly $k=n$ independent Paulis and reducing the number of two-qubit gates is important, especially in the near-term~\cite{blais_quantum-information_2007,obrien2010_photon,martinis2014_google,ballance_high-fidelity_2016,allen_optimal_2017,monroe2017_pnas_compare,wendin2017_superconductor_fidelity,reagor2018_rigetti,schafer_fast_2018,webb_resilient_2018,lukin2018_rydberg,simmons2019_electron_spin,huang2019_quantum_dot,blumel_power-optimal_2019,he_two-qubit_2019}. As far as we are aware, ours is the first paper to consider the $k<n$ case explicitly. Furthermore, we allow classical post-processing, which can save quantum resources.

More specifically, we introduce constructions ``\textsc{CZ}'' and ``\textsc{CNOT}''. The \textsc{CZ}-construction builds on work by \citeauthor*{vandennest2004_graph_states}~\cite{vandennest2004_graph_states} in the graph-state literature to yield $U$ with a number of two-qubit gates at most
\begin{equation}
    u_{\text{cz}}(k,n) = kn-k(k+1)/2.
\end{equation}
The \textsc{CNOT}-construction builds on our \textsc{CZ}-construction, and work by \citeauthor{aaronson_gottesman_simulation}~\cite{aaronson_gottesman_simulation} and~\citeauthor*{patel_markov_hayes_cnot}~\cite{patel_markov_hayes_cnot} to yield $U$ with a number of two-qubit gates at most
\begin{equation}
    u_{\text{cnot}}(k,n) = O(kn/\log k).
\end{equation}
We stress that $u_{\text{cnot}}$ and $u_{\text{cz}}$ are worst-case upper bounds. In practice, numerical simulations are needed to determine whether the \textsc{CZ}- or \textsc{CNOT}-construction is actually more efficient. We note that, in the case of $k=n$, our constructions have two-qubit gate-counts scaling no worse than the previous best of $O(n^{2})$~\cite{gokhale2019_commute}. Other works, such as Ref.~\cite[Appendix~A]{mosca2019_pauli_partioning}, do not provide gate-counts, or demonstrate a scaling that is worse than $O(n^{2})$~\cite[Appendix~B]{izmaylov2019_general_commute}, or present a method that only works for the second-quantised fermionic Hamiltonian~\cite{babbush2019_chem_commute, gokhale_n3}. 

We end our paper with a series of numerical results on molecular Hamiltonians that serve to illustrate and validate our theoretical work. In doing so, we first quantify the performance of \textsc{Sorted Insertion} using the metric $\hat{R}$ for molecules ranging in size from hydrogen H$_{2}$, which requires two qubits, to hydrogen selenide H$_{2}$Se, which requires 38, finding that it leads to a 10 to 60 fold improvement in the number of measurements required. Note that we are defining a single measurement to consist of a measurement of all qubits, and so the number of measurements equals the number of ansatz state preparations. 

We further present results of using \textsc{Sorted Insertion} alongside our \textsc{CZ}-construction on molecules requiring up to 38 qubits to calculate the actual number of two-qubit gates required for real molecular systems. Our numerical results show that the typical number of two-qubit gates is fewer than the worst-case $u_{\text{cz}}(k,n)$ by a factor of 3.5. 

Lastly, we present data showing \textsc{Sorted Insertion} outperforming the four conventional greedy colouring algorithms we tested, as measured by the metric $\hat{R}$. Our data strongly \textit{challenges} the assumption that minimising the number of collections is best, as arrangements with the smallest number of collections do not typically perform the best with respect to $\hat{R}$.

We mention that an entirely different measurement strategy has been proposed recently~\cite{huang_2020} that is based on so-called ``classical shadows''. Ref.~\cite[Illustrative Example Applications]{huang_2020} explicitly mentions that their method is unsuitable for estimating expectation values of global observables. Even for a Hamiltonian consisting of a single global Pauli operator $P \coloneqq P_1\otimes P_2\otimes\dots\otimes P_n$, with $P_i\in \{X,Y,Z\}$, their number of classical shadows must scale as $2^{\Omega(n)}$. In contrast, our method, as with the original VQE proposal, scales independently of $n$ and would simply measure each $P_i$ and multiply the outcomes.

Follow-up Ref.~\cite{hadfield_2020} sought to ameliorate this problem by using so-called ``locally-biased classical shadows''. However, as they mention in \cite[Remark 1]{hadfield_2020}, the \emph{analytical} upper bound on their estimator can still scale as $2^{\Omega(n)}$ for the same $P$ above. Separately, Ref.~\cite{hadfield_2020} did not compare their techniques to ours, rather they observe that we require deeper circuits. However, the additional depth our method requires is modest \emph{relative} to that of ansatz state preparation.

There have also been interesting developments~\cite{kubler_2020,arrasmith_2020} on measurement reduction in VQE focusing on designing classical optimisers that intelligently choose how many measurements to use at each iteration -- for example, fewer measurements may be used near the start versus near the end. These methods complement ours and can be combined with ours to reduce further the overall number of measurements.

\section{Collecting strategies}\label{sec:collections}
In this section, we develop a method for collecting Pauli operators designed to minimise the number of measurements required to obtain an estimate of the expectation value of an operator to a given level of accuracy.

We have an operator, $O$, of the form
\begin{equation}\label{eq:collecting_strat_collecting}
    O  = \sum_{i=1}^{N} O_{i} = \sum_{i=1}^{N} \sum_{j=1}^{m_{i}} a_{ij} P_{ij},
\end{equation}
where $N$ is the number of collections of mutually commuting operators, $m_{i}$ is the number of operators in collection $i$, $P_{ij}$ is the $j$th Pauli operator in the $i$th collection and $a_{ij}\in \mathbb{R}$ is its coefficient. 

We assume that the $P_{ij}$s are distinct for different $(i,j)$, i.e., we do not collect a single Pauli operator appearing in $O$ into more than one collection by splitting its coefficient. This is discussed further at the end of this section.

Given $\epsilon$, let $M_{u}$ and $M_{g}$ be the minimum number of measurements required to attain an accuracy $\epsilon$ in the uncollected and collected (as per Eq.~\ref{eq:collecting_strat_collecting}) cases respectively. Finding $M_{u}$ is a special case of finding $M_{g}$. To find $M_{g}$, we can solve the constrained optimisation problem that asks how a given number of measurements should be distributed in order to maximise accuracy. The solution to such a problem gives the optimal measurement strategy for a given operator and state. Using Lagrange multipliers~\cite{wecker2015progress, Rubin_2018,romero2018strategies}, we find that the optimal number $n_i$ of measurements of collection $i$ is
\begin{equation}
    n_i\coloneqq\frac{1}{\epsilon^2}\sqrt{\Var[O_i]}\ \sum_{j=1}^N\sqrt{\Var[O_j]}.
\end{equation}

Therefore, we have
\begin{equation}
M_g = \sum_{i=1}^N n_i = \left (\frac{1}{\epsilon}\sum_{i=1}^{N} \sqrt{\Var[O_{i}]}  \right )^{2},
\end{equation}
where 
\begin{align}
    \Var[O_{i}] &= \covs{O_{i}}{O_{i}} \\
    &= \expval{O_{i}^{2}} - \expval{O_{i}}^{2}.
\end{align}
Since $M_u$ is just $M_g$ evaluated with every operator in its own collection, we have
\begin{equation}
M_{u} = \left ( \frac{1}{\epsilon} \sum_{i=1}^{N} \sum_{j=1}^{m_{i}} |a_{ij}|\sqrt{\Var[P_{ij}]} \right)^{2},
\end{equation}
where 
\begin{equation}
    \Var[P_{ij}] = 1- \expval{P_{ij}}^{2}.
\end{equation}
The ratio $R$, defined as
\begin{equation}\label{eq:rratio}
    R \coloneqq \frac{M_u}{M_g} = \left( \frac{\sum_{i=1}^{N} \sum_{j=1}^{m_{i}} |a_{ij}|\sqrt{\Var[P_{ij}]}}{\sum_{i=1}^{N} \sqrt{\Var[O_{i}]}}\right)^{2},
\end{equation}
therefore acts as a natural metric for the performance of a particular arrangement of Paulis under the assumption that measurements are distributed optimally. The larger the value of $R$, the greater the saving that assembling operators into collections provides. In Theorem~\ref{theorem:breakingbad} below, we show that combining two collections into one can only improve $R$.

\begin{theorem}\label{theorem:breakingbad}
    Consider two collections $\mathcal{A}$ and $\mathcal{B}$ of mutually commuting Paulis, where each Pauli is in at most one collection. Suppose that it is possible to combine $\mathcal{A}$ and $\mathcal{B}$ into a single commuting collection $\mathcal{C}\coloneqq \mathcal{A} \cup \mathcal{B}$. Let $R(\{\mathcal{A},\mathcal{B}\})$ and $R(\{\mathcal{C}\})$ denote the $R$ metric, as defined in Eq.~\eqref{eq:rratio}, for the two collections separated and combined respectively. Then
    \begin{equation}\label{eq:breaking_bad}
        R(\{\mathcal{A},\mathcal{B}\}) \leq R(\{\mathcal{C}\}) \ \ \text{ for all } \ \ket{\psi}.
    \end{equation}
\end{theorem}
\begin{proof}
We write the operators associated with collections $\mathcal{A}$ and $\mathcal{B}$ as
\begin{align}
    O_{\mathcal{A}} &= \sum_{j=1}^{k} a_{j} P_{j},\\
    O_{\mathcal{B}} &= \sum_{j=k+1}^{k+l} a_{j} P_{j},
\end{align}
for some $k,l$ respectively, where the indexing of the sum in $O_{\mathcal{B}}$ is for later convenience. The operator associated with $\mathcal{C}$ can therefore be written as
\begin{equation}
    O_{\mathcal{C}} = \sum_{j=1}^{k+l} a_{j} P_{j}.
\end{equation}

We define the $(k+l)\times(k+l)$ covariance matrix $C$, which is symmetric and positive semi-definite, associated with collection $\mathcal{C}$ by its entries
\begin{equation}
    C_{ij} \coloneqq \covs{P_i}{P_j},
\end{equation}
with
\begin{equation}
    \covs{P_i}{P_j}\coloneqq \expval{P_i P_j}_{\psi}- \expval{P_i}_{\psi}\expval{P_j}_{\psi}.
\end{equation}

We write $\mathbf{a}$ for the vector of length $k+l$ whose first $k$ entries are given by $\{a_{j}\}_{j=1}^{k}$ and the remaining $l$ entries are zero. Similarly, we write $\mathbf{b}$ for the vector of length $k+l$ whose first $k$ entries are zero and the remaining $l$ entries are given by $\{a_{j}\}_{j=k+1}^{k+l}$.

Since the numerator of $R$ is constant for different arrangements, to prove our claim it suffices for us to show that the denominator of $R$ is not smaller in the case of $\{\mathcal{A}, \mathcal{B}\}$ than in the case of $\{\mathcal{C}\}$, i.e., that
\begin{equation}
    \sqrt{\Var[O_{\mathcal{A}}]} + \sqrt{\Var[O_{\mathcal{B}}]} \geq \sqrt{\Var[O_{\mathcal{C}}]}.
\end{equation}
With the above notation, this is equivalent to
\begin{equation}
\sqrt{\mathbf{a}\trans C \mathbf{a}} + \sqrt{\mathbf{b}\trans C\mathbf{b}} \geq \sqrt{(\mathbf{a+b})\trans C  (\mathbf{a+b})}.
\end{equation}
We find
\begin{equation}
\begin{aligned}
    &\sqrt{\mathbf{a}\trans C \mathbf{a}} + \sqrt{\mathbf{b}\trans C\mathbf{b}} \\
    &= \sqrt{\left(\sqrt{\mathbf{a}\trans C\mathbf{a}} + \sqrt{\mathbf{b}\trans C\mathbf{b}}\right)^2} \\
    &= \sqrt{\mathbf{a}\trans C\mathbf{a} + \mathbf{b}\trans C\mathbf{b} +2 \sqrt{(\mathbf{a}\trans C\mathbf{a}) (\mathbf{b}\trans C\mathbf{b}})} \\
    &\geq \sqrt{\mathbf{a}\trans C\mathbf{a} + \mathbf{b}\trans C\mathbf{b} +2(\mathbf{a}\trans  C \mathbf{b}})\\
    &= \sqrt{(\mathbf{a+b})\trans  C  (\mathbf{a+b})},
\end{aligned}
\end{equation}
where the fourth line follows due to the Cauchy-Schwarz inequality on a semi-inner product defined by $\langle{\mathbf{a}},{\mathbf{b}}\rangle \coloneqq \mathbf{a} \trans C \mathbf{b}$~\cite[Example 1.1]{conway1994course} and the fifth line follows as $C$ is symmetric. This establishes our claim. 

The inequality of Eq.~\ref{eq:breaking_bad} holds with equality if and only if there exist $0\neq \alpha, \beta \in$  $\mathbb{C}$, such that $\langle{\alpha \mathbf{a} + \beta \mathbf{b}},{\alpha \mathbf{a} + \beta \mathbf{b}}\rangle = 0$ \cite[Example 1.4]{conway1994course}. 
\end{proof}

Theorem~\ref{theorem:breakingbad} shows that it is impossible to mitigate covariances by splitting collections under the optimal measurement strategy. This is in contrast to Refs~\cite{mcclean2016theory, gokhale2019_commute}, which showed that it is possible using a sub-optimal measurement strategy. In Appendix~\ref{app:groupcomb}, we re-do precisely their example using the optimal measurement strategy. Note that Theorem~\ref{theorem:breakingbad} simply implies that it is never better to break a single collection into two, and not that the minimum number of collections is necessarily the best; indeed, we provide a counter-example below. 

If all of the variances in $R$ are replaced by their expectation values over the uniform spherical measure (see Ref.~\cite[Ch.~7]{watrous_2018}), we obtain another metric, $\hat{R}$, given by
\begin{equation}
    \label{eq:rhatratio}
    \hat{R} \coloneqq \left(\frac{\sum_{i=1}^{N}\sum_{j=1}^{m_i} |a_{ij}|}{\sum_{i=1}^{N} \sqrt{\sum_{j=1}^{m_i}|a_{ij}|^2}
    }\right)^2.
\end{equation}
The derivation of $\hat{R}$ is given in Appendix~\ref{app:rhatder}. The same proof as in Theorem \ref{theorem:breakingbad} can be used to show that breaking a collection into two is never better when measured by $\hat{R}$; the only difference is the covariance matrix must be replaced by its expectation over the uniform spherical measure. 

$\hat{R}$ is a particularly useful metric because it approximates $R$, but can be calculated analytically. When ignorant of the quantum state $\ket{\psi}$, a good collecting strategy should aim to maximise $\hat{R}$. As mentioned above, this is emphatically not the same as minimising the number of collections. For example, let us consider the operator $O$, on two qubits, defined as
\begin{equation}
    O = 4 \, X_{1} + 4 \, X_{2} + Z_{2} + Z_{1}X_{2}.
\end{equation}

The operators cannot be assembled into a single commuting collection but there is a unique arrangement, $\mathcal{G}_{1}$, consisting of two collections, given by
\begin{equation}
    \mathcal{G}_{1} \coloneqq \Big\{ \{4 \, X_{1}, \ Z_{2}\},\{4 \, X_{2}, \ Z_{1}X_{2}\} \Big\},
\end{equation}
which has $\hat{R} = 1.47$ to two decimal places. $\mathcal{G}_{1}$ performs worse than an arrangement $\mathcal{G}_{2}$, with three collections, given by
\begin{equation}
    \mathcal{G}_{2} \coloneqq \Big\{ \{4 \, X_{1}, 4 \, X_{2}\},\{Z_{2}\}, \{Z_{1}X_{2}\} \Big\},
\end{equation}
which has $\hat{R} = 1.71$ to two decimal places.

Reflecting upon such examples and the nature of the square root function in the denominator of $\hat{R}$ leads us to propose a collecting algorithm that prioritises collecting Paulis with large coefficients.

This algorithm, which we name \textsc{Sorted Insertion}, can be described as follows. Given an operator
\begin{equation}
    O = \sum_{i=1}^{t} a_{j}P_{j},
\end{equation}
where $t$ is the total number of Pauli operators, the entire set $\{(a_{j}, P_{j})\}_{j=1}^{t}$ is sorted by the absolute value of coefficients $a_j$, so that $|a_{1}|\geq \dots \geq |a_{t}|$. Then, in the order $i=1,\dots t$, it is checked whether $P_{i}$ commutes with all elements in an existing collection. If it does, it is added to that collection. If not, a new collection is created and $P_{i}$ is inserted there. The collections are checked in order of their creation\footnote{We note that other collection orderings could also be considered, for example, ordering the collections by their values of $\sum_{j=1}^{m_{i}} |a_{ij}|^{2}$, where the $a_{ij}$ are the coefficients of the Paulis contained in collection $i$ at the point of ordering.}. The collections formed are tracked and outputted at the end once the final $P_{t}$ has been inserted.

\textsc{Sorted Insertion} has worst-case complexity $O(nt^{2})$, where we recall that $n$ is the number of qubits. In contrast, greedy colouring algorithms, as implemented in Ref.~\cite{izmaylov2019_minclique}, require pre-generating the \textit{full} commutation graph which has complexity $\Theta(nt^{2})$ even in the best case. The colouring algorithms then run on this graph adding further complexity -- see Table~\ref{tab:complexity}. Therefore, \textsc{Sorted Insertion} is at least as efficient as these greedy colouring algorithms.

\begin{table}[ht]
\centering
\begin{tabular}{@{}lc@{}}
\toprule
\\[1pt]
Colouring Algorithm         & Time Complexity
\\[1pt]
\hhline{--}
\\[1pt]
Largest First              & $O(t^2)$       
\\[1pt]
Connected Sequential d.f.s.& $O(t^2)$        
\\[1pt]
DSATUR                     & $O(t^2 \log t)$ 
\\[1pt]
Independent Set            & $O(t^3)$        
\\[1pt]
\hhline{--}
\end{tabular}

\caption{Time complexities of the greedy colouring algorithms we compare with \textsc{Sorted Insertion} in Sec.~\ref{sec:numerics} \textit{after} pre-generating the commutation graph~\cite{Kosowski2004ClassicalCO}.}
\label{tab:complexity}
\end{table}

Note that \textsc{Sorted Insertion} is a heuristic algorithm for maximising $\hat{R}$. It works well in practice, as demonstrated by Table~\ref{tab:cliques} in Sec.~\ref{sec:numerics}, but may fail to output collections that actually maximise $\hat{R}$. In fact, it is unlikely we can go beyond heuristics as the problem of maximising $\hat{R}$ is NP-hard in general. We can show this by combining the reduction from \textsc{Min-Clique-Cover} in Ref.~\cite[Appendix~A]{gokhale2019_commute} with the NP-hardness of $|V|^{1-\epsilon}$-approximating the \textsc{Min-Clique-Cover}~\cite{zuckerman_2006}, where $|V|$ is the number of vertices and $\epsilon=0.5$. We appeal to the hardness of \emph{approximation} because, as we have stressed, maximising $\hat{R}$ is not the same as minimising the number of collections, which corresponds to minimising number of cliques in a cover. For a full proof, see Appendix~\ref{app:rhat_nphard}.

Lastly, we mention that our choice not to put a single Pauli into multiple collections by splitting its coefficient may be sub-optimal as can be seen by considering the operator $O$, on two qubits, defined as
\begin{equation}
    O = X_1 + Z_1 +  2 Z_2
\end{equation}
and a grouping of the form
\begin{equation}
    \mathcal{G}_3 \coloneqq \{\{X_1+\alpha Z_2\}, \{Z_1+(2-\alpha) Z_2\}\}
\end{equation}
for $\alpha \in [0,2]$. It can be verified that $\hat{R}$, defined as before even with coefficient splitting, is maximised at $\alpha = 1$. We leave maximising $\hat{R}$ with consideration of coefficient splitting for future work.

\section{Rotation constructions}\label{sec:rotation_circuit_constructions}
In this section, we present two methods of calculating a rotation circuit which enables measurements of all operators in a mutually commuting collection to be obtained simultaneously. We assume familiarity of the reader with the stabiliser formalism, especially the $2n$-bit binary representation of $n$-qubit Paulis~\cite{qec_binary_orthogonal_geometry, gottesman_phd, nielsen_chuang, aaronson_gottesman_simulation}. We follow the convention that the upper and lower halves of the binary matrix encode $Z$ and $X$ operators respectively. This representation is reviewed in Appendix~\ref{app:binrep}. We also reserve symbols $I_{m}$ and $0_{m}$ for the $m\times m$ identity and all-zero matrices respectively.

Our starting point is a commuting collection, $\mathcal{S}'_\text{start}$, of $m$ Paulis which can be represented as a binary $2n\times m$ matrix $\Sstart'$. By Gaussian elimination on $\Sstart'$, we can form a $2n\times k$ matrix $\Sstart$ representing a set $\mathcal{S}_\text{start}$ of $k$ independent Paulis drawn from  $\mathcal{S}'_\text{start}$ where $k \leq \min(n, m)$. Our goal is to transform $\Sstart$, using certain allowed transformations, into a $2n \times n$ matrix $\Send$ where
\begin{equation}\label{eq:send}
    \Send = 
    \begin{pmatrix}
        \\[1pt]
        I_{n}
        \\[1pt]
        \hline
        \\[1pt]
        0_{n}
        \\[1pt]
    \end{pmatrix}.
\end{equation}

Let $U$ denote the circuit consisting of one-qubit and two-qubit transformations in the order they were applied from $\Sstart \rightarrow \Send$. Then applying $U$ to any state $\ket{\psi}$, measuring in the computational basis, and classically post-processing allows us to obtain measurements of all operators in $\mathcal{S}'_{\text{start}}$ on $\ket{\psi}$ simultaneously. 

The allowed set, $\mathcal{T}$, of transformations on a binary $2n \times m$ matrix $S$ is, where $p$ ranges over all columns, $r$ ranges over all rows, and addition is $\mathrm{mod}\ 2$:
\begin{enumerate}
    \item One-qubit and two-qubit quantum gates, corresponding to row operations, specifically:
    \begin{itemize}\renewcommand{\labelitemi}{\tiny$\blacksquare$}
        \item \textsc{CZ} on qubits $i$ and $j$:\\  $ S_{ip} \leftarrow S_{ip} + S_{j+n, p}$,\\ 
        $S_{jp} \leftarrow S_{jp} + S_{i+n,p}$.
        \item \textsc{CNOT} on control-qubit $i$ and target-qubit $j$: \\
        $ S_{ip} \leftarrow S_{ip} + S_{jp}$,\\ 
        $S_{j+n,p} \leftarrow S_{j+n,p} + S_{i+n,p}$.
        \item \textsc{Hadamard} (\textsc{H}) on qubit $i$:\\
        \hspace{\parindent} $S_{ip} \leftrightarrow S_{i+n,p}$.
        \item \textsc{Phase} (\textsc{P}) on qubit $i$: \\
        \hspace{\parindent} $S_{ip} \leftarrow S_{ip} + S_{i+n,p}$.
    \end{itemize}
    \item Classical post-processing:
    \begin{itemize}\renewcommand{\labelitemi}{\tiny$\blacksquare$}
        \item Products of eventual single-qubit computational-basis measurements:\\
        \hspace{\parindent}right-multiplication by invertible $m \times m$ matrix.
        \item Relabelling of qubits $i$ and $j$:\\
        \hspace{\parindent} $S_{ip} \leftrightarrow S_{jp}$, \\
        \hspace{\parindent} $S_{n+i,p} \leftrightarrow S_{n+j,p}$.
    \end{itemize}
    
    \item Basis extension:
    \begin{itemize}\renewcommand{\labelitemi}{\tiny$\blacksquare$}
    \item Addition of further stabiliser:\\
    \hspace{\parindent}appending of new column $S_{r,m+1}$.
    \end{itemize}
\end{enumerate}

In the near term, operations in $\mathcal{T}$ have justifiably different costs. Two-qubit gates are much more costly than one-qubit gates, which are more costly than classical post-processing. Basis extension has no cost. In the context of one-qubit and two-qubit quantum gates, cost can refer to either fidelity or gate-time~\cite{blais_quantum-information_2007,obrien2010_photon,martinis2014_google,ballance_high-fidelity_2016,allen_optimal_2017,monroe2017_pnas_compare,wendin2017_superconductor_fidelity,reagor2018_rigetti,schafer_fast_2018,webb_resilient_2018,lukin2018_rydberg,simmons2019_electron_spin,huang2019_quantum_dot,blumel_power-optimal_2019,he_two-qubit_2019}. Therefore, we have aimed to minimise the number of two-qubit gates in the $U$ resulting from our constructions. This means, for example, we choose to perform a row swap using classical post-processing rather than the two-qubit \textsc{SWAP} gate. 

In presenting our constructions, we shall refer to the commutativity condition, preserved under $\mathcal{T}$, given by
\begin{equation}\label{eq:commutativitycheck}
    S^\top J_{2n}S  = 0_{m},
\end{equation}
where $S$ is the $2n\times m$ matrix encoding the Paulis and 
\begin{equation}
    J_{2n} = \begin{pmatrix} 
    0_{n} & I_{n}
    \\[1pt]
    \\ I_{n} & 0_{n}
    \end{pmatrix}.
\end{equation}
We ignore any changes in sign of stabilisers under $\mathcal{T}$ as this can be easily accounted for by classical post-processing. Readers interested in this and other details are referred to Appendix~\ref{app:czconstruction}, where we work through our \textsc{CZ}-construction with a specific example.

\subsection{\textsc{CZ}-construction}
\label{sec:czconstr}
Important to our first approach is the special class of stabiliser states known as graph states. Consider any graph $G$ on $n$ vertices. The graph state $\ket{\Phi_G}$ is then defined by $n$ independent stabiliser generators
\begin{equation}\label{eq: graph_stabilisers}
    g_i = X_i \prod_{j \in \text{nbd}(i)} Z_j, \ i = 1, \dots, n,
\end{equation}
where $\text{nbd}(i)$ is the set of neighbours of vertex $i$ in $G$. The binary representation of these stabilisers is
\begin{equation}\label{eq:graph_matrix}
    \Sgraph = 
    \begin{pmatrix}
        \\[1pt]
        A
        \\[1pt]
        \hline
        \\[1pt]
        I_{n}
        \\[1pt]
    \end{pmatrix},
\end{equation}
where $A$, an $n \times n$ symmetric matrix with $0$s on its diagonal, is exactly the adjacency matrix of $G$. 

It is well-known that $\ket{\Phi_G} = V \textsc{H}^{\otimes n}\ket{0^{n}} $ where $V$ is a product of \textsc{CZ} gates and \textsc{H} is the \textsc{Hadamard} gate. More specifically, $V$ applies \textsc{CZ} between qubits $i$ and $j$ if and only if vertex $i$ neighbours $j$ in $G$. \citeauthor*{vandennest2004_graph_states}~\cite{vandennest2004_graph_states} tell us that \textit{any} stabiliser state can be transformed to a graph state by a product of single-qubit Clifford gates and classical post-processing. It is therefore clear that we can transform any $\Sstart$ to $\Send$ via $\Sgraph$ using at most $n(n-1)/2$ two-qubit (\textsc{CZ}) gates, as this is the maximum number of edges on an $n$-vertex graph. The interesting question is whether we can do better by exploiting the potential low rank $k\leq n$ of $\Sstart$.

Our answer is in the affirmative and we now present an explicit and efficient algorithm that constructs $U$ with at most $u_{\text{cz}}(k,n) = kn - k(k+1)/2$ two-qubit gates.

\begin{figure}[ht]
\centering
\resizebox{8.5cm}{!}{%
\begin{tabular}{ccccccc}
   &&  $S_{1}$ &&  $S_{2}$ & & 
    \\[10pt]
    \\
    $\Sstart$ & $\xrightarrow{1}$ & $\begin{pmatrix}
        \\[1pt]
        A
        \\[1pt]
        \hline
        \\[1pt]
        B
        \\[1pt]
    \end{pmatrix}$ & $\xrightarrow{2}$ &
    $\begin{pmatrix}
        \\[1pt]
        C
        \\[1pt]
        D
        \\[1pt]
        \hline
        \\[1pt]
        I_{k}
        \\[1pt]
        F
        \\[1pt]
    \end{pmatrix}$ & 
    \\
    \\
    & & & & & & \\
    \\[10pt]
    && $S_{3}$ &&  $S_{4}=\Sgraph$ &&  
    \\[10pt]
    \\
     & $\xrightarrow{3}$ &     $\begin{pmatrix}
        \\[1pt]
        C & D\trans
        \\[1pt]
        D & 0_{n-k}
        \\[1pt]
        \hline
        \\[1pt]
        I_{k} & 0
        \\[1pt]
        F & I_{n-k}
        \\[1pt]
    \end{pmatrix}$ & $\xrightarrow{4}$ &
    $\begin{pmatrix}
        \\[1pt]
        E & D\trans
        \\[1pt]
        D & 0_{n-k}
        \\[1pt]
        \hline
        \\[1pt]
        I_{k} & 0
        \\[1pt]
        0 & I_{n-k}
        \\[1pt]
    \end{pmatrix}$ & $\xrightarrow{5}$ & 
    $\Send$
\end{tabular}}
\caption{{Reductions used in our \textsc{CZ}-construction.}}
\label{fig:graph_state_reduction}
\end{figure}

In Fig.~\ref{fig:graph_state_reduction}, we illustrate the sequence of reductions that allow us to reach $\Sgraph$, and so $\Send$, from $\Sstart$. We now describe the salient aspects of each step.

\begin{enumerate}
    \item $\Sstart \rightarrow S_{1}.$ Following Ref.~\cite[Lemma 6]{aaronson_gottesman_simulation}, we can apply \textsc{Hadamard} gates so that $B$ has rank $k$. By classical row-swaps (relabelling of qubits), we can ensure that the first $k$ rows of $B$ have full-rank.
    
    \item $S_{1}\rightarrow S_{2}.$ Since the upper $k\times k$ submatrix of $B$ has full-rank, column operations corresponding to classical post-processing can reduce it to $I_{k}$.
    
    \item $S_{2}\rightarrow S_{3}.$ Additional columns, corresponding to further operators, are appended. We can directly verify that the extension to $S_{3}$ is valid by Eq.~\eqref{eq:commutativitycheck}. Clearly $S_{3}$ has full column-rank $n$. The sparsity of our chosen extension shall play a crucial role in the reduced two-qubit gate size of $U$ when $k<n$ in both the \textsc{CZ}- and \textsc{CNOT}-constructions. 
    
    \item $S_{3}\rightarrow S_{4}=\Sgraph.$ Column operations can eliminate $F$, then \textsc{Phase} gates can ensure $E$ has zeros on its diagonal. Importantly, $S_{4}$ represents a graph state $\Sgraph$.
    
    \item $S_{4}\rightarrow \Send$. \textsc{Phase} and \textsc{CZ} gates can implement this final reduction as discussed above. The maximum number of \textsc{CZ} gates required to map $S_{4}$ to $\Send$ equals the maximum number of off-diagonal $1$s in the upper half of $S_{4}$. When $n=k$, this is $n(n-1)/2 = O(n^{2})$. When $k\neq n$, this is $w_{cz}(k,n) = kn - k(k+1)/2$ due to sparsity of the upper half of $S_{4}$ which traces back to the structure of the additional operators appended in step $S_{3}\rightarrow S_{4}$. 
\end{enumerate}

Note that in step $S_{4}\rightarrow \Send$, we can first try to reduce the upper half of $S_{4}$ by single-qubit gates before applying \textsc{CZ}. One way to do this is to reduce the number of edges in the graph whose adjacency matrix is specified by the upper half of $S_{4}$ by the so-called ``local complementation'' operation ~\cite{vandennest2004_graph_states,eisert2004_graph_lu, bouchet2002_lu_graphs}. This corresponds precisely to reducing the number of $\textsc{CZ}$ gates in our $\textsc{CZ}$-construction.

\subsection{\textsc{CNOT}-construction}
\label{sec:cnotconstr}

\begin{figure}[ht]
\centering
\resizebox{8.5cm}{!}{%
\begin{tabular}{ccccccc}
    &&  $S_{5}$ &&   $S_{6}$ & & 
    \\[10pt]
    \\
    $S_{4}$ & $\xrightarrow{5}$ & $\begin{pmatrix}
        \\[1pt]
        0_{k} & D\trans
        \\[1pt]
        D & 0_{n-k}
        \\[1pt]
        \hline
        \\[1pt]
        M & 0
        \\[1pt]
        0 & I_{n-k}
        \\[1pt]
    \end{pmatrix}$ & $\xrightarrow{6}$ &
    $\begin{pmatrix}
        \\[1pt]
        I_{k} & D\trans
        \\[1pt]
        D & I_{n-k}
        \\[1pt]
        \hline
        \\[1pt]
        I_{k} & 0
        \\[1pt]
        0 & I_{n-k}
        \\[1pt]
    \end{pmatrix}$ & 
    \\
    \\
    & & & & & & \\
    [10pt]
     && $S_{7}$ &&  $S_{8}$ &&  
    \\[10pt]
    \\
     & $\xrightarrow{7}$ &     $\begin{pmatrix}
        \\[1pt]
        D_{1} M_{1}
        \\[1pt]
        \hline
        \\[1pt]
        M_{1}
        \\[1pt]
    \end{pmatrix}$ & $\xrightarrow{8}$ &
    $\begin{pmatrix}
        \\[1pt]
        I_{k}-D\trans D & 0
        \\[1pt]
        0 & I_{n-k} \\
        \\[1pt]
        \hline
        \\[1pt]
        I_{k} & 0 \\
        \\[1pt]
        0 & I_{n-k}
        \\[1pt]
    \end{pmatrix}$ & $\xrightarrow{9}$ & 
    $\Send$
\end{tabular}}
\caption{{Reductions used in our \textsc{CNOT}-construction starting at $S_{4}$ of our \textsc{CZ}-construction.}}
\label{fig:cnot_block_reduction}
\end{figure}

We start from $S_{4}$ above, which we reached without using two-qubit gates. Now, instead of using one block of \textsc{CZ} gates, we reduce to $\Send$ as shown in Fig.~\ref{fig:cnot_block_reduction}, using three blocks of \textsc{CNOT} gates.

\begin{enumerate}
\setcounter{enumi}{4}
    \item $S_{4}\rightarrow S_{5}$. Note that $E$ must be symmetric by the commutativity condition given in  Eq.~\eqref{eq:commutativitycheck}. Then, following Ref.~\cite[Lemma 7]{aaronson_gottesman_simulation}, we can eliminate $E$ using single-qubit and $O(k^{2}/ \log k)$ \textsc{CNOT} gates. This is accomplished by noting that any symmetric binary $E$ can be Cholesky decomposed as $E=\Lambda + M\trans M$, with $\Lambda$ diagonal and $M$ invertible.
    \item $S_{5}\rightarrow S_{6}$. Reduce $M$ to $I_{k}$ by column operations, then add $1$s on the top diagonal by phase gates.
    \item $S_{6} \rightarrow S_{7}$. Now, the upper $n\times n$ matrix can be block-Cholesky decomposed as
\begin{equation}\label{eq: block-cholesky}
\begin{pmatrix}
        I_{k} & D\trans
        \\[1pt]
        D & I_{n-k}
\end{pmatrix}
=
M_{1}\trans D_{1}M_{1},
\end{equation} 
where
\begin{align}
    M_{1} &\coloneqq 
    \begin{pmatrix}
    I_{k} & 0 
    \\[1pt]
    D & I_{n-k}
\end{pmatrix},\\
    D_{1} &\coloneqq
    \begin{pmatrix}
        I_{k} - D\trans D & 0 
        \\[1pt]
        0 & I_{n-k}
    \end{pmatrix}.
\end{align}

Next, we apply \textsc{CNOT} gates corresponding to $M_{1}$. The number of \textsc{CNOT} gates required here equals the number of row operations required to reduce $M_{1}$ to $I_{n}$. We find this is at most $u_{\text{cnot}}(k,n)=O(kn/\log k)$ via arguments of~\citeauthor*{patel_markov_hayes_cnot}~\cite{patel_markov_hayes_cnot}. The proof is given in Appendix~\ref{app:fourrussians}.

    \item $S_{7}\rightarrow S_{8}$. Multiply by $M_{1}^{-1}$ on the right.
    
    \item $S_{8}\rightarrow \Send$. $I_{k}-D\trans D$ is a $k\times k$ symmetric matrix and so can be again eliminated via the Cholesky decomposition using $O(k^{2}/\log k)$ \textsc{CNOT} gates.
\end{enumerate}

Note that in the three steps $S_{4}\rightarrow S_{5}$, $S_{6}\rightarrow S_{7}$, and $S_{8}\rightarrow \Send$, we have used blocks of \textsc{CNOT} gates. The method we used to synthesise these blocks is size-optimal~\cite[Lemma~1]{patel_markov_hayes_cnot}, but we could have alternatively used methods in Ref.~\cite{2019optimal_cnot_space_depth_cas_pku}, that built on Ref.~\cite{1998moore_nilsson_cnot_depth}, to achieve optimal space-depth tradeoff, where space refers to extra ancilla qubits.

To end our discussion of constructing rotation circuits, we briefly mention a third, ancilla-based construction with two-qubit gate size at most $kn$. This construction is well-known in the context of syndrome measurement~\cite{kitaev_qec_measure,nielsen_chuang} in quantum error correction but does not seem to have been mentioned in the context of measuring a Pauli decomposition of an operator, as in VQE. To measure $k$ commuting Paulis $\{P_{i}\}_{i=1}^{k}$, this ``ancilla-construction'' uses $k$ ancilla and involves $k$ consecutive blocks of generalised-\textsc{CNOT} gates, each targeted at a different ancilla. The controls in block $b\leq k$ are activated or deactivated by the $+1$ or $-1$ eigenstates of the single-qubit Paulis forming $P_{b}$\footnote{So $Z$ corresponds to standard control on $\{\ket{0}, \ket{1}\}$, $X$ corresponds to control on $\{\ket{+}, \ket{-}\}$, $Y$ corresponds to control on $\{\ket{+i}, \ket{-i}\}$, and $I$ corresponds to not having a generalised-\textsc{CNOT}. These generalised-\textsc{CNOT} gates can be implemented by the standard \textsc{CNOT} conjugated by $H$ in case of $X$, or $SH$ in case of $Y$.}. $k$ single-qubit measurements are performed on the ancilla at the end of each block to exactly give measurements of $P_{i}$. Unfortunately, this construction requires $k$ extra ancilla qubits (or else a single extra ancilla qubit that needs to be measured and reset $k$ times) and has worse worst-case two-qubit gate size than both of our constructions.

\section{Application to VQE}\label{sec:numerics}

In this section, we present numerical results of the collecting method discussed in Sec.~\ref{sec:collections}, alongside the \textsc{CZ}-construction of Sec.~\ref{sec:czconstr} to construct the rotation circuits for given commuting collections. In particular, we have applied our methods to the Hamiltonians of simple molecules so as to
demonstrate their use in the context of VQE. The full results are given in Table~\ref{tab:fulldat} of Appendix~\ref{app:full_numerical_results}, with a subset shown in Table~\ref{tab:reddat}. In all cases, we used OpenFermion~\cite{openfermion} to obtain Hamiltonians in the STO-3G  basis, at approximately the equilibrium geometry of the molecules, with the symmetry conserving Bravyi-Kitaev transformation~\cite{bravyi_kitaev2002, bksc_fermion}. In order to reduce the number of two-qubit gates required, we considered qubits on which all operators in a collection locally commute separately -- a one-qubit rotation per locally commuting qubit is all that is required to do so. 

In Fig.~\ref{fig:plots}(a), we plot the average collection size against the number of qubits, $n$, for the molecular Hamiltonians. We can see that the average collection size increases with increasing $n$, and the increase does not appear to be slowing down. We therefore conclude that our sorting method works well on systems up to at least size $n = 38$, and looks likely to work for larger systems. However, the key advantage of assembling a Hamiltonian into collections of mutually commuting operators is a reduction in the number of measurements required to obtain an energy expectation to a certain level of accuracy, and collection size alone does not directly quantify this reduction. For a given Hamiltonian and quantum state, the reduction is instead given by $R$, as in Eq.~\eqref{eq:rratio}.

\begin{table*}[ht]\centering\footnotesize
\renewcommand{\arraystretch}{1} %
\resizebox{17cm}{!}{
\Large
\begin{tabular}{@{}C{1.5cm}R{4.5em}R{4.5em} R{0.3cm} R{3.5em}R{3.5em}R{3.5em} R{0.3cm} R{3.5em}R{3.5em}R{3.5em}R{3.5em} R{0.3cm} R{3.5em}R{3.5em}R{3.5em}R{3.5em}R{3.5em}
@{}}\toprule

\rule{0pt}{1.5em} \multirow{2}{*}[-0.5em]{Molecule} & \hspace{1.2em} \multirowcell{2}[-0.7em]{\hfill $n$\\ \hfill qubits} & \hspace{1em} \multirowcell{2}[-0.5em]{\hfill $t$\\ \hfill Paulis} &  &\multicolumn{3}{c}{Arrangement} & \hspace{1em} &\multicolumn{4}{c}{Ratios $R$, $\hat{R}$} & \hspace{1em}
 \hspace{1em}&\multicolumn{3}{c}{Rotation Circuit $2q$-size}\\ 

\hhline{~~~~---~----~---}\noalign{\vskip-0.5pt}
\rule{0pt}{1.5em} &  &  & \hspace{1em} & $N$ & $ \overline{m_{i}}$ & $\overline{k_{i}}$ & \hspace{1em} & $R$ min & $R$ mean & $R$ max & $\hat{R}$ & \hspace{1em} & theory max & true max & mean\\ 

\hhline{----------------}
H$_{2}$ & 2 & 4 & \hspace{1em}      & 2 & 2.00 & 1.50 & \hspace{1em} & 1.09 & 1.93 & 4.60 & 1.76 & \hspace{1em} & 0 & 0 & 0\\
H$_{3}^{+}$ & 4 & 59 & \hspace{1em} & 10 & 5.90 & 3.50 & \hspace{1em} & 3.76 & 11.92 & 33.04 & 10.25 & \hspace{1em} & 6 & 3 & 0.80\\
LiH & 10 & 630 & \hspace{1em}       & 41 & 15.37 & 6.85 & \hspace{1em} & 19.60 & 24.91 & 34.74 & 23.97 & \hspace{1em} & 45 & 18 & 5.29\\
OH$^{-}$ & 10 & 630 & \hspace{1em}  & 38 & 16.58 & 7.29 & \hspace{1em} & 6.32 & 8.90 & 12.86 & 8.51 & \hspace{1em} & 45 & 17 & 5.63\\
HF & 10 & 630 & \hspace{1em}        & 39 & 16.15 & 6.97 & \hspace{1em} & 6.07 & 8.57 & 12.27 & 8.21 & \hspace{1em} & 45 & 16 & 5.74\\
H$_{2}$O & 12 & 1085 & \hspace{1em} & 51 & 21.27 & 9.04 & \hspace{1em} & 7.68 & 11.27 & 16.96 & 10.67 & \hspace{1em} & 66 & 26 & 7.37\\
BH$_{3}$ & 14 & 1584 & \hspace{1em} & 66 & 24.00 & 10.36 & \hspace{1em} & 17.21 & 20.93 & 32.13 & 20.05 & \hspace{1em} & 91 & 26 & 9.56\\
NH$_{3}$ & 14 & 3608 & \hspace{1em} & 118 & 30.58 & 11.34 & \hspace{1em} & 12.65 & 15.96 & 26.93 & 15.31 & \hspace{1em} & 91 & 28 & 10.26\\
CH$_{4}$ & 16 & 3887 & \hspace{1em} & 123 & 31.60 & 13.39 & \hspace{1em} & 16.96 & 21.63 & 29.33 & 20.27 & \hspace{1em} & 120 & 45 & 16.75\\
\toprule
\end{tabular}}
\caption{{A reduced set of results of the numerical simulations discussed in the main text, and shown in full in Table~\ref{tab:fulldat}. For each molecule, we show a number of results related to the collecting of Hamiltonian terms, how the arrangement reduces the number of measurements required using the metrics $R$ and $\hat{R}$, and the number of two-qubit gates in the resulting rotation circuits. Note that the mean value of $R$ and $\hat{R}$ are very similar.}}
\label{tab:reddat}
\end{table*}

\begin{table*}[ht]
\centering
\renewcommand{\arraystretch}{1} %
\resizebox{17cm}{!}{%
\Large
\begin{tabular}{@{}C{1.5cm} R{0.3cm} R{3.5em}R{3.5em} R{0.3cm} R{3.5em}R{3.5em} R{0.3cm} R{3.5em}R{3.5em} R{0.3cm} R{3.5em}R{3.5em} R{0.3cm} R{3.5em}R{3.5em}
@{}}\toprule

\rule{0pt}{1.5em} \multirow{2}{*}{Molecule} & \hspace{1em} &\multicolumn{2}{c}{Largest First} & \hspace{1em} &\multicolumn{2}{c}{Connected Sequential d.f.s.} & \hspace{1em} &\multicolumn{2}{c}{Independent Set } & \hspace{1em}
&\multicolumn{2}{c}{DSATUR} & \hspace{1em}
&\multicolumn{2}{c}{\textsc{Sorted Insertion}} \\ 

\hhline{~~--~--~--~--~--}\noalign{\vskip-0.5pt}
\rule{0pt}{1.5em} & \hspace{1em} & $N$ & $\hat{R}$ & \hspace{1em} & $N$ & $\hat{R}$ & \hspace{1em} & $N$ & $\hat{R}$ & \hspace{1em} & $N$ & $\hat{R}$ & \hspace{1em} & $N$ & $\hat{R}$\\
\hhline{----------------}
H$_{2}$ & \hspace{1em} & 2 & \textbf{1.76} & \hspace{1em} & 2 & \textbf{1.76} & \hspace{1em} & 2 & \textbf{1.76} & \hspace{1em} & 2 & \textbf{1.76} & \hspace{1em} & 2 & \textbf{1.76} \\
H$_{3}^{+}$ & \hspace{1em} & 10 & 4.86 & \hspace{1em} & 10 & 10.25 & \hspace{1em} & 10 & \textbf{10.30} & \hspace{1em} & 9 & 4.10 & \hspace{1em}  & 10 & 10.25 \\
LiH & \hspace{1em} & 39 & 23.87 & \hspace{1em} & 45 & 23.33 & \hspace{1em} & 30 & 5.72 & \hspace{1em} & 29 & 10.47 & \hspace{1em} & 41 & \textbf{23.97} \\
OH$^{-}$ & \hspace{1em} & 40 & 8.27 & \hspace{1em} & 41 & 8.41 & \hspace{1em} & 21 & 3.00 & \hspace{1em} & 28 & 3.23 & \hspace{1em} & 37 & \textbf{8.51} \\
HF & \hspace{1em} & 38 & 8.05 & \hspace{1em} & 41 & 8.07 & \hspace{1em} & 21 & 2.80 & \hspace{1em} & 28 & 3.23 & \hspace{1em} & 38 & \textbf{8.21} \\
H$_2$O & \hspace{1em} & 57 & 2.98 & \hspace{1em} & 55 & \textbf{10.66} & \hspace{1em} & 42 & 3.87 & \hspace{1em} & 51 & 3.18 & \hspace{1em} & 51 & \textbf{10.66} \\
BH$_3$ & \hspace{1em} & 66 & 4.80 & \hspace{1em} & 85 & 18.70 & \hspace{1em} & 60 & 7.85 & \hspace{1em} & 72 & 4.11 & \hspace{1em} & 68 & \textbf{20.05} \\
NH$_3$ & \hspace{1em} & 124 & 6.50 & \hspace{1em} & 174 & 13.97 & \hspace{1em} & 126 & 4.03 &  & 137 & 2.92 & \hspace{1em} & 117 & \textbf{15.31} \\
CH$_4$ & \hspace{1em} & 122 & 5.84 & \hspace{1em} & 176 & 18.93 & \hspace{1em} & 114 & 9.88 & \hspace{1em} & 110 & 4.90 & \hspace{1em} & 125 & \textbf{20.27} \\
O$_2$ & \hspace{1em} & 62 & 13.62 & \hspace{1em} & 85 & 19.95 & \hspace{1em} & 42 & 6.79 & \hspace{1em} & 52 & 7.91 & \hspace{1em} & 67 & \textbf{20.23} \\
N$_2$ & \hspace{1em} & 62 & 15.00 & \hspace{1em} & 86 & 21.15 & \hspace{1em} & 39 & 8.37 & \hspace{1em} & 49 & 5.80 & \hspace{1em} & 78 & \textbf{22.10} \\
CO & \hspace{1em} & 124 & 20.70 & \hspace{1em} & 155 & 20.67 & \hspace{1em} & 89 & 6.03 & \hspace{1em} & 106 & 4.55 & \hspace{1em} & 128 & \textbf{21.31} \\
HCl & \hspace{1em} & 117 & 2.16 & \hspace{1em} & 141 & 10.29 & \hspace{1em} & 98 & 3.52 & \hspace{1em} & 104 & 2.04 & \hspace{1em} & 123 & \textbf{10.36} \\
NaH & \hspace{1em} & 121 & 8.78 & \hspace{1em} & 181 & 12.40 & \hspace{1em} & 149 & 3.44 & \hspace{1em} & 145 & 3.65 & \hspace{1em} & 135 & \textbf{12.90} \\
H$_2$S & \hspace{1em} & 122 & 8.81 & \hspace{1em} & 180 & \textbf{12.45} & \hspace{1em} & 147 & 3.80 & \hspace{1em} & 145 & 3.66 & \hspace{1em} & 147 & 11.60 \\
\toprule
\end{tabular}
}
\caption{{Comparison of the arrangements produced by the greedy colouring algorithms ``Largest First'', ``Connected Sequential d.f.s.'' (depth first search), ``Independent Set'' and ``DSATUR'' as implemented by the Python package NetworkX~\cite{hagberg_exploring_2008} with our method \textsc{Sorted Insertion}. For each method, the number of collections produced, $N$, and the metric $\hat{R}$ given by Eq.~\eqref{eq:rhatratio}, are presented. The best or joint best methods are highlighted in bold for each molecule.}}
\label{tab:cliques}
\end{table*}

\begin{figure}[ht]\centering
    \scalebox{0.96}{\input{multiplot}}
    \caption{{The results of numerical simulations discussed in the text. We show (a) the average collection size, (b) the value of $\hat{R}$, given by Eq.~\eqref{eq:rhatratio}, and (c) the ratio of the worst-case maximum number of two-qubit gates in a single rotation circuit to the actual number as a function of the number of qubits for a range of simple molecules. A portion of the data is shown in Table~\ref{tab:reddat}, and the full data is shown in Table~\ref{tab:fulldat}.}}   \label{fig:plots}
\end{figure}

We therefore calculated the value of $R$ for 100 different quantum states, generated using 100 random sets of ansatz parameters with a hardware efficient ansatz of depth 1, for the nine smallest molecular systems. We show the mean, minimum and maximum values for each molecule. In practice, the value of $R$ can at best be obtained approximately by making measurements on the quantum computer and so cannot be used to determine the expected advantage of a particular arrangement a priori. The metric $\hat{R}$, given by Eq.~\eqref{eq:rhatratio}, on the other hand, depends only on the coefficients of the terms in the Hamiltonian. From Table~\ref{tab:reddat}, we can see that $\hat{R}$ closely approximates the average of $R$ over many ansatz parameters, for the ansatz we have considered, but can be calculated analytically without the need for simulations. Further investigation of the relationship between $\hat{R}$ and $R$ will be useful if considering a different ansatz. In Fig.~\ref{fig:plots}(b), we show $\hat{R}$ as a function of the number of qubits for our full selection of molecules. We can see that it is highly molecule dependent, with systems of similar size having very different values.

The reduction in the number of measurements required comes at the cost of applying additional quantum gates before the qubits are measured, the most costly of which are two-qubit gates. For the \textsc{CZ}-construction, we demonstrated in Sec.~\ref{sec:czconstr} that the maximum number of additional two-qubit gates required for a collection with $k$ independent terms is $nk - k(k+1)/2$. We would like to know, in practice, how many additional two-qubit gates are required at a maximum, as this is the quantum resource that is most limiting. Assuming for a given Hamiltonian that at least one collection has rank $n$, obtaining a measurement of all terms in a Hamiltonian on $n$ qubits may therefore require applying an additional $n(n-1)/2$ gates in a single circuit. However, for the molecules we have considered, we find that the largest number of two-qubit gates required is in fact far lower than this, typically by a factor of about 3.5, as can be seen in Fig.~\ref{fig:plots}(c).

Given the close relationship between the average value of $R$ and the value of $\hat{R}$, we propose using $\hat{R}$ as a metric for the quality of a collecting method and compare different methods of collecting the operators with this metric in mind. The results are shown in Table~\ref{tab:cliques}, along with the number of collections of operators, $N$, that each method produces. Out of the methods, ``Independent Set'' was best at approximating the minimum clique cover -- it found the cover with the fewest cliques in all but one case. However, the minimum clique cover does not necessarily result in the fewest measurements, with ``Independent Set'' only performing best once with respect to $\hat{R}$. Overall, our \textsc{Sorted Insertion} was best at maximising $\hat{R}$, performing best or joint best in all but two cases.

\section{Conclusion}
We have addressed two problems related to the efficient measurement of Pauli operators on a quantum computer in the presence of finite sampling error. The first is how to assemble a set of Paulis into collections in which they mutually commute, and the second is how to construct rotation circuits that enable mutually commuting Paulis to be measured simultaneously.

For the first problem, we contribute two natural metrics, $R$ and $\hat{R}$, that justifiably measure the effectiveness of an arrangement, followed by a collecting strategy motivated by $\hat{R}$ that we call \textsc{Sorted Insertion}. For the second problem, we contribute two rotation circuit constructions, \textsc{CZ} and \textsc{CNOT}. The \textsc{CZ}-construction uses a maximum of $u_{\text{cz}}(k,n) = kn - k(k+1)/2$ two-qubit gates while the \textsc{CNOT}-construction uses a maximum of $u_{\text{cnot}}(k,n) = O(kn/\log k)$. 

We have applied our theoretical work to the task of estimating energies of molecules in the context of VQE. Comparison to other collecting methods shows that while \textsc{Sorted Insertion} does not normally result in the smallest number of collections, it nearly always results in the best value of $\hat{R}$. We also find that, for the CZ-construction, the largest number of two-qubit gates required is typically less than the theoretical worst-case by a factor of about 3.5.

\section{Acknowledgement}
We thank Oscar Higgott for informing us of the ancilla-based simultaneous measurement method and Ref.~\cite[Fig.~14]{kitaev_qec_measure}. We also thank the editor Ronald de Wolf and two anonymous reviewers for their comments and suggestions.

\bibliographystyle{unsrtnat}
\bibliography{references}

\clearpage
\appendix
\section{Example to demonstrate that combining two collections into one never reduces \texorpdfstring{$R$}{R}}\label{app:groupcomb}

Consider the example in Refs~\cite{mcclean2016theory, gokhale2019_commute} where we consider measuring the energy, given by the Hamiltonian
\begin{equation}
    H = -XX - YY + ZZ + IZ + ZI, 
\end{equation} 
of the state $\ket{\psi} = \ket{01}$. The covariance matrix is 
\begin{equation}
    C = \begin{pmatrix}
        \\[1pt]
        1 & 1 & 0 & \textbf{0} & \textbf{0} 
        \\[1pt]
        1 & 1 & 0 & \textbf{0} & \textbf{0} 
        \\[1pt]
        0 & 0 & 0 & 0 & 0 
        \\[1pt]
        \textbf{0} & \textbf{0} & 0 & 0 & 0 
        \\[1pt]
        \textbf{0} & \textbf{0} & 0 & 0 & 0
        \\[1pt]
    \end{pmatrix}.
\end{equation}
The covariance between the non-commuting operators in the upper right and lower left blocks is not defined. We have set them to equal zero for convenience (highlighted in bold). 

First, we consider collecting the Paulis into 
\begin{equation}
    \{-XX, -YY, ZZ\}, \  \{IZ,ZI\}.
\end{equation}
For these collections of Paulis, the coefficient vectors are
\begin{equation}
    \mathbf{a} = \begin{pmatrix}
    \\[1pt]
    -1 
    \\[1pt]
    -1 
    \\[1pt]
    1 
    \\[1pt]
    0 
    \\[1pt]
    0 
    \\[1pt]
    \end{pmatrix},\  
    \mathbf{c} = \begin{pmatrix}
    \\[1pt]
    0 
    \\[1pt]
    0 
    \\[1pt]
    0 
    \\[1pt]
    1 
    \\[1pt]
    1 
    \\[1pt]
    \end{pmatrix}.
\end{equation}
The number of measurements to achieve an accuracy of $\epsilon$ is 
\begin{equation}
\begin{aligned}
    M_g &= \frac{1}{\epsilon^{2}}\left (\sqrt{\mathbf{a}\trans C \mathbf{a}} +  \sqrt{\mathbf{c}\trans C \mathbf{c}}\right )^2 \\
    &= \frac{1}{\epsilon^{2}}\left(\sqrt{4} +  \sqrt{0}\right)^2 \\
    &= \frac{4}{\epsilon^{2}}.
\end{aligned}
\end{equation}
Now, let us consider breaking up the first collection into 
\begin{equation}
    \{-XX\}, \ \{-YY, ZZ\}.
\end{equation}
In this case, the coefficient vectors are
\begin{equation}
\mathbf{a} = \begin{pmatrix}
    \\[1pt]
    -1 
    \\[1pt]
    0 
    \\[1pt]
    0 
    \\[1pt]
    0 
    \\[1pt]
    0 
    \\[1pt]
    \end{pmatrix},\ 
    \mathbf{b} = \begin{pmatrix}
    \\[1pt]
    0 
    \\[1pt]
    -1 
    \\[1pt]
    1 
    \\[1pt]
    0 
    \\[1pt]
    0 
    \\[1pt]
    \end{pmatrix}, \ 
    \mathbf{c} = \begin{pmatrix}
    \\[1pt]
    0 
    \\[1pt]
    0 
    \\[1pt]
    0 
    \\[1pt]
    1 
    \\[1pt]
    1 
    \\[1pt]
    \end{pmatrix}.
\end{equation}
The number of measurements required to attain an accuracy $\epsilon$ is therefore
\begin{equation}
\begin{split}
    M_g &= \frac{1}{\epsilon^{2}}\left(\sqrt{\mathbf{a}\trans C \mathbf{a}} + \sqrt{\mathbf{b}\trans C \mathbf{b}} +  \sqrt{\mathbf{c}\trans C \mathbf{c}}\right)^2  \\
    &= \frac{1}{\epsilon^{2}}\left(\sqrt{1} + \sqrt{1} + \sqrt{0}\right)^2 \\
    &= \frac{4}{\epsilon ^2}.
    \end{split}
\end{equation}
Therefore, under the optimal measurement strategy it is not preferable to break the $\{-XX, -YY, ZZ\}$ collection into $\{-XX\}$ and $\{-YY, ZZ\}$. In this specific example we have equality because for $\alpha=-\beta$ we have $\langle{\alpha \mathbf{a} + \beta \mathbf{b}},{\alpha \mathbf{a} + \beta \mathbf{b}}\rangle = 0$. 

\section{Derivation of \texorpdfstring{$\hat{R}$}{R} formula}\label{app:rhatder}
\begin{claim}
    For $R$ as defined in Eq.~\eqref{eq:rratio}, if all variances and covariances are replaced with their expectation value over uniform spherical distribution, we obtain a new metric, $\hat{R}$, given by
    \begin{equation}\label{eq:Rtilde}
    \hat{R} = \left(\frac{\sum_{i=1}^{N}\sum_{j=1}^{N_i} |a_{ij}|}{\sum_{i=1}^{N} \sqrt{\sum_{j=1}^{N_i}a_{ij}^2}
    }\right)^2.
\end{equation}
\end{claim}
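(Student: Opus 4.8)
The plan is to replace the covariance matrix governing $\Var[O_i]$ and each single-Pauli variance $\Var[P_{ij}]$ by their averages over the uniform spherical measure, and to observe that this averaged covariance matrix is simply a multiple of the identity, after which the claimed formula drops out by cancellation. The only analytic input needed is the pair of standard moment identities for a pure state $\ket{\psi}$ drawn from the uniform spherical measure on $\mathbb{C}^{d}$ with $d = 2^{n}$:
\[
\mathbb{E}\big[\expval{A}_{\psi}\big] = \frac{\mathrm{Tr}(A)}{d}, \qquad \mathbb{E}\big[\expval{A}_{\psi}\expval{B}_{\psi}\big] = \frac{\mathrm{Tr}(A)\,\mathrm{Tr}(B) + \mathrm{Tr}(AB)}{d(d+1)}.
\]

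First I would evaluate the averaged single-Pauli variance. Every non-identity Pauli $P$ is traceless and satisfies $P^{2} = I$, so the identities give $\mathbb{E}[\expval{P}_{\psi}] = 0$ and $\mathbb{E}[\expval{P}_{\psi}^{2}] = \mathrm{Tr}(I)/[d(d+1)] = 1/(d+1)$; hence $\mathbb{E}[\Var[P]] = 1 - 1/(d+1) = d/(d+1)$. Next I would evaluate the averaged covariance of two distinct commuting Paulis $P_{i} \neq P_{j}$. Their product is, up to a phase, a non-identity Pauli and therefore traceless, so $\mathrm{Tr}(P_{i}P_{j}) = 0$; combined with $\mathrm{Tr}(P_{i}) = \mathrm{Tr}(P_{j}) = 0$, both $\mathbb{E}[\expval{P_{i}P_{j}}_{\psi}]$ and $\mathbb{E}[\expval{P_{i}}_{\psi}\expval{P_{j}}_{\psi}]$ vanish, giving $\mathbb{E}[\covs{P_{i}}{P_{j}}] = 0$. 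Thus the averaged covariance matrix of any collection is $\overline{C} = \tfrac{d}{d+1}\,I$.

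Finally I would substitute these averages into $R$ as defined in Eq.~\eqref{eq:rratio}. The numerator becomes $\sqrt{d/(d+1)}\sum_{i}\sum_{j}|a_{ij}|$, while each collection variance becomes $\mathbf{a}_{i}^{t}\overline{C}\,\mathbf{a}_{i} = \tfrac{d}{d+1}\sum_{j}a_{ij}^{2}$, so the denominator becomes $\sqrt{d/(d+1)}\sum_{i}\sqrt{\sum_{j}a_{ij}^{2}}$. The common factor $\sqrt{d/(d+1)}$ cancels between numerator and denominator, leaving exactly Eq.~\eqref{eq:Rtilde}. I expect the only delicate step to be the vanishing of the off-diagonal averaged covariances: it rests entirely on the product of two distinct Paulis being traceless, and once that is in hand the remainder is routine substitution and cancellation.
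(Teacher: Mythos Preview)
Your proposal is correct and follows essentially the same route as the paper: compute the Haar-averaged variance of a single non-identity Pauli (the paper's $1-\alpha_n$ with $\alpha_n=1/(2^n+1)$ is your $d/(d+1)$), observe that the averaged off-diagonal covariances vanish, and substitute so that the common prefactor cancels. If anything, your version is more self-contained, since you derive both facts directly from the second-moment identity rather than citing \cite{watrous_2018} and \cite{gokhale2019_commute} as the paper does.
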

\begin{proof}
The variance of a single Pauli operator is
\begin{equation}
    \Var[P_i] \coloneqq  1 - \expval{P_i}^2. 
\end{equation}
The expectation of this variance for all $P_i \neq I$ is 
\begin{equation}
\begin{split}
    \mathbb{E}[\Var P_i] &= 1 - \mathbb{E}[\expval{P_i}^2] \\
    &= 1- \int [\bra{\psi} P_i \ket{\psi}]^2 \diff \psi \\ 
    &= 1 - \alpha_{n},
    \end{split}
\end{equation}
where $\alpha_{n} \coloneqq 1/(2^{n}+1)$, with $n$ the number of qubits, is independent of $P_{i}$~\cite[Exercise~7.3]{watrous_2018}. Trivially, $\Var [I] = 0$. In addition, it was shown in Ref.~\cite{gokhale2019_commute} that
\begin{equation}
    \mathbb{E}[\covs{P_i}{P_j}] = 0,
\end{equation}
for all $P_i \neq P_j$. Substitution of these results yields Eq.~\eqref{eq:Rtilde}.
\end{proof} 

\section{Binary representation}\label{app:binrep}
The Pauli subset $\mathcal{P}_{n}$ on $n$-qubits is a collection of $4^{n+1}$ elements defined by
\begin{align}\label{eq:pauli_group}
    \mathcal{P}_{n} = \{&i^{k}  \ \sigma_{1}\otimes \dots \otimes \sigma_{n} \ | \ \sigma_{i} \in \{I,X,Y,Z\}, \nonumber \\
        &\ k \in \{0,1,2,3\}  \}.
\end{align}

The binary representation, first introduced by~\citeauthor*{qec_binary_orthogonal_geometry}~\cite{qec_binary_orthogonal_geometry},  is a representation of $\mathcal{P}_{n}$ as binary vectors. In this representation, Paulis differing only in phase $i^{k}$ are represented in the same way.

Single-qubit Paulis are represented by 2-dimensional binary vectors, so that
\begin{equation}
\begin{alignedat}{3}
    &\sigma_{00}  \coloneqq \ && I    &&\mapsto (0, 0), \\
    &\sigma_{01} \coloneqq \ && X  &&\mapsto (0, 1),\\
    &\sigma_{10}  \coloneqq \ && Z &&\mapsto (1, 0),\\
    &\sigma_{11}  \coloneqq \ && Y &&\mapsto (1, 1).
\end{alignedat}
\end{equation}
An $n$-qubit Pauli
\begin{equation}
    \sigma_{u_{1}v_{1}} \otimes \dots \otimes \sigma_{u_{n}v_{n}}
\end{equation}
is then represented by the $2n$-dimensional binary vector
\begin{equation}
(u_{1}, \dots, u_{n}, v_{1}, \dots, v_{n})\trans.
\end{equation}
In this representation, two $n$-qubit Paulis with binary vectors $a$ and $b$ commute if and only if
\begin{equation}\label{eq:appendix_Jvector}
    a\trans J_{2n} b = 0,
\end{equation} 
where $J_{2n}$ denotes the $2n\times 2n$ matrix
\begin{equation}\label{eq:appendix_J}
    J_{2n} \coloneqq
    \begin{pmatrix}
        0 & I_{n} \\
        \\[1pt]
        I_{n} & 0
    \end{pmatrix}.
\end{equation}

Given a set $\mathcal{S}$ of $m$ $n$-qubit Paulis, we can write down a corresponding $2n \times m$ binary matrix $S$ where each column represents a Pauli. Then, from Eq.~\eqref{eq:appendix_Jvector}, we deduce that all Paulis in $\mathcal{S}$ mutually commute if and only if
\begin{equation}
    S\trans J_{2n} S = 0_{m},
\end{equation}
which recovers Eq.~\eqref{eq:commutativitycheck} in the main text. We say that the set $\mathcal{S}$ of Paulis is independent if the matrix $S$ has rank $m$.

We shall often find it helpful to write $S$ in terms of its upper half $S^{(Z)}$ and lower half $S^{(X)}$, separated by a horizontal line for visual-aid, i.e.,
\begin{equation}
    S = \begin{pmatrix}
    \\[1pt]
    S^{(Z)}
    \\[1pt]
    \hline
    \\[1pt]
    S^{(X)}
    \\[1pt]
    \end{pmatrix}.
\end{equation}

The conjugation action of quantum gates on $\mathcal{S}$ can be represented as transformations to the matrix $S$. For example, we document the transformations on $S$ that represent four common quantum gates. In the following, addition is $\mathrm{mod}\ 2$ and $p$ ranges over all columns $\{1,\dots, m\}$.

\begin{itemize}\renewcommand{\labelitemi}{\tiny$\blacksquare$}
    \item \textsc{CZ} on qubits $i$ and $j$:\\  $ S_{ip} \leftarrow S_{ip} + S_{j+n, p}$,\\ 
        $S_{jp} \leftarrow S_{jp} + S_{i+n,p}$.
    \item \textsc{CNOT} on control-qubit $i$ and target-qubit $j$: \\
        $ S_{ip} \leftarrow S_{ip} + S_{jp}$,\\ 
        $S_{j+n,p} \leftarrow S_{j+n,p} + S_{i+n,p}$.
    \item \textsc{Hadamard} (\textsc{H}) on qubit $i$:\\
        \hspace{\parindent} $S_{ip} \leftrightarrow S_{i+n,p}$.
    \item \textsc{Phase} (\textsc{P}) on qubit $i$: \\
        \hspace{\parindent} $S_{ip} \leftarrow S_{ip} + S_{i+n,p}$.
\end{itemize}

These rules can be directly verified by conjugating $X_{i}, \ X_{j}, \ Z_{i}, \ Z_{j}$ by the listed gates. They are also reproduced in Sec.~\ref{sec:rotation_circuit_constructions} of the main text. 

\section{\textsc{CZ}-construction example}\label{app:czconstruction}
We walk through our \textsc{CZ}-construction for a specific example. In this example, we would like to obtain measurements simultaneously of a collection $\mathcal{S}'_\text{start}$ of six four-qubit Paulis given by
\begin{equation}
\begin{aligned}
    P_{1} &= Z_{1}Z_{2}Z_{3}Z_{4}, \\
    P_{2} &= X_{1}X_{2}Y_{3}Y_{4}, \\
    P_{3} &= Y_{1}Y_{2}X_{3}X_{4}, \\
    P_{4} &=
    Y_{2}X_{3}, \\
    P_{5} &=
    Y_{1}X_{4}, \\
    P_{6} &= X_{1}Z_{2}Z_{3}Y_{4}.
\end{aligned}
\end{equation}
We can represent these Paulis in a matrix $\Sstart'$ with
\begin{equation}
    \Sstart' = 
    \begin{pmatrix}
        \\[1pt]
        1 & 0 & 1 & 0 & 1 & 0
        \\[1pt]
        1 & 0 & 1 & 1 & 0 & 1
        \\[1pt]
        1 & 1 & 0 & 0 & 0 & 1
        \\[1pt]
        1 & 1 & 0 & 0 & 0 & 1
        \\[1pt]
        \hline
        \\[1pt]
        0 & 1 & 1 & 0 & 1 & 1
        \\[1pt]
        0 & 1 & 1 & 1 & 0 & 0
        \\[1pt]
        0 & 1 & 1 & 1 & 0 & 0
        \\[1pt]
        0 & 1 & 1 & 0 & 1 & 1
        \\[1pt]
    \end{pmatrix}.
\end{equation}
By Gaussian elimination, we find the reduced row echelon form of $\Sstart'$ to be
\begin{equation}
    \begin{pmatrix}
        \\[1pt]
        1 & 0 & 1 & 0 & 1 & 0
        \\[1pt]
        0 & 1 & 1 & 0 & 1 & 1
        \\[1pt]
        0 & 0 & 0 & 1 & 1 & 1
        \\[1pt]
        0 & 0 & 0 & 0 & 0 & 0
        \\[1pt]
        \hline
        \\[1pt]
        0 & 0 & 0 & 0 & 0 & 0
        \\[1pt]
        0 & 0 & 0 & 0 & 0 & 0
        \\[1pt]
        0 & 0 & 0 & 0 & 0 & 0
        \\[1pt]
        0 & 0 & 0 & 0 & 0 & 0
        \\[1pt]
    \end{pmatrix}.
\end{equation}
The pivot columns are numbers 1, 2 and 4 which tells us that $P_{1}$, $P_{2}$ and $P_{4}$ are the three independent Paulis from which the remaining Paulis in $\mathcal{S}'_\text{start}$ can be constructed. Therefore, we can write $\Sstart' = \Sstart R_{0}^{-1}$, where
\begin{equation}
     \Sstart \coloneqq 
    \begin{pmatrix}
        \\[1pt]
        1 & 0 & 0
        \\[1pt]
        1 & 0 & 1
        \\[1pt]
        1 & 1 & 0 
        \\[1pt]
        1 & 1 & 0 
        \\[1pt]
        \hline
        \\[1pt]
        0 & 1 & 0 
        \\[1pt]
        0 & 1 & 1 
        \\[1pt]
        0 & 1 & 1
        \\[1pt]
        0 & 1 & 0
        \\[1pt]
    \end{pmatrix}, \  
    R_{0}^{-1} \coloneqq
    \begin{pmatrix}
        \\[1pt]
        1 & 0 & 1 & 0 & 1 & 0
        \\[1pt]
        0 & 1 & 1 & 0 & 1 & 1
        \\[1pt]
        0 & 0 & 0 & 1 & 1 & 1
        \\[1pt]
    \end{pmatrix}.
\end{equation}

Note that the inverse on $R_{0}^{-1}$ is purely notational. Now, the lower half $\Sstart^{(X)}$ of $\Sstart$ has column echelon form
\begin{equation}
    \Sstart^{(X)} = 
    \begin{pmatrix}
        \\[1pt]
        1 & 0 & 0
        \\[1pt]
        1 & 1 & 0
        \\[1pt]
        1 & 1 & 0
        \\[1pt]
        1 & 0 & 0
        \\[1pt]
    \end{pmatrix},
\end{equation}
and so the first two rows are pivot rows. In order to give the lower half of $\Sstart$ a rank of $k=3$, we therefore apply a \textsc{Hadamard} to the rows corresponding to qubits 3 and 4 so that
\begin{equation}
    S_{1} \coloneqq Q_{1} \Sstart =     \begin{pmatrix}
        \\[1pt]
        1 & 0 & 0
        \\[1pt]
        1 & 0 & 1
        \\[1pt]
        0 & 1 & 1
        \\[1pt]
        0 & 1 & 0
        \\[1pt]
        \hline
        \\[1pt]
        0 & 1 & 0
        \\[1pt]
        0 & 1 & 1
        \\[1pt]
        1 & 1 & 0
        \\[1pt]
        1 & 1 & 0
        \\[1pt]
    \end{pmatrix}, 
\end{equation}
where 
\begin{equation}
Q_{1} \coloneqq
\begin{pmatrix}[cccc|cccc]
    \\[1pt]
    1 & 0 & 0 & 0 & 0 & 0 & 0 & 0
    \\[1pt]
    0 & 1 & 0 & 0 & 0 & 0 & 0 & 0
    \\[1pt]
    0 & 0 & 0 & 0 & 0 & 0 & 1 & 0
    \\[1pt]
    0 & 0 & 0 & 0 & 0 & 0 & 0 & 1
    \\[1pt]
    \hline
    \\[1pt]
    0 & 0 & 0 & 0 & 1 & 0 & 0 & 0
    \\[1pt]
    0 & 0 & 0 & 0 & 0 & 1 & 0 & 0
    \\[1pt]
    0 & 0 & 1 & 0 & 0 & 0 & 0 & 0
    \\[1pt]
    0 & 0 & 0 & 1 & 0 & 0 & 0 & 0
    \\[1pt]
\end{pmatrix}. 
\end{equation}
    
The lower half of $S_{1}$ now has rank 3, and performing Gaussian elimination on it, we find
\begin{equation}
    S_{2} \coloneqq Q_{1} \Sstart R_{1} = 
    \begin{pmatrix}
        \\[1pt]
        1 & 0 & 1
        \\[1pt]
        0 & 1 & 1
        \\[1pt]
        0 & 1 & 0
        \\[1pt]
        1 & 0 & 0
        \\[1pt]
        \hline
        \\[1pt]
        1 & 0 & 0
        \\[1pt]
        0 & 1 & 0
        \\[1pt]
        0 & 0 & 1
        \\[1pt]
        0 & 0 & 1
        \\[1pt]
    \end{pmatrix},
\end{equation}
where
\begin{equation}
    R_{1} = 
    \begin{pmatrix}
        \\[1pt]
        0 & 1 & 0
        \\[1pt]
        0 & 1 & 1
        \\[1pt]
        1 & 1 & 0
        \\[1pt]
    \end{pmatrix}^{-1} = 
    \begin{pmatrix}
        \\[1pt]
        1 & 0 & 1
        \\[1pt]
        1 & 0 & 0
        \\[1pt]
        1 & 1 & 0
        \\[1pt]
    \end{pmatrix}.
\end{equation}

We now extend $S_{2}$ to a rank $n=4$ matrix by adding a column that corresponds to a fourth Pauli $P_{ext}$. In the main text, this is the crucial basis extension step from $S_{2}\rightarrow S_{3}$ which might have seemed fortuitous. In fact, the extension was systematically obtained as follows.

To make our reasoning clearer, let us represent  $S_{2}$ alternatively by the matrix 
\begin{equation}\label{eq:appendix_pauliS2}
P(S_{2})\coloneqq \begin{pmatrix}
    \\[1pt]
    Y & I & I & Z
    \\[1pt]
    I & Y & Z & I
    \\[1pt]
    Z & Z & X & X
    \\[1pt]
\end{pmatrix},
\end{equation}
where each row corresponds to a Pauli operator given by a column of $S_{2}$. Looking at the form of $P(S_{2})$, we see that we can place $X$ in the $4$th qubit position of $P_{ext}$ (and nowhere else) to ensure $P_{ext}$ is independent of the other Paulis. Then we observe that the left 3-by-3 sub-matrix of $P(S_{2})$ has $X/Y$ on the diagonal and $I/Z$ everywhere else. This means we can place $I/Z$ in the other qubit positions of $P_{ext}$ depending on whether the $X$ in its $4$th qubit position commutes with the $4$th position terms of the other Paulis. 

By this prescription, we find $P_{ext} = Z_{1}I_{2}I_{3}X_{4}$. Therefore, $S_{2}$ is extended to
\begin{equation}
    S_{3} \coloneqq 
    \begin{pmatrix}
        \\[1pt]
        1 & 0 & 1 & 1
        \\[1pt]
        0 & 1 & 1 & 0
        \\[1pt]
        0 & 1 & 0 & 0
        \\[1pt]
        1 & 0 & 0 & 0
        \\[1pt]
        \hline
        \\[1pt]
        1 & 0 & 0 & 0
        \\[1pt]
        0 & 1 & 0 & 0
        \\[1pt]
        0 & 0 & 1 & 0
        \\[1pt]
        0 & 0 & 1 & 1
        \\[1pt]
    \end{pmatrix},
\end{equation}
and
\begin{equation}
    Q_{1} \Sstart R_{1} = S_{3}R_{2}^{-1},
\end{equation}
where
\begin{equation}
    R_{2}^{-1} \coloneqq \begin{pmatrix}
        \\[1pt]
        1 & 0 & 0
        \\[1pt]
        0 & 1 & 0
        \\[1pt]
        0 & 0 & 1
        \\[1pt]
        0 & 0 & 0
        \\[1pt]
    \end{pmatrix}.
\end{equation}
Note that the inverse on $R_{2}^{-1}$ is also purely notational. The lower half of $S_{3}$ is full-rank and so we can take its inverse to find
\begin{equation}
    R_{3} = 
    \begin{pmatrix}
        \\[1pt]
        1 & 0 & 0 & 0
        \\[1pt]
        0 & 1 & 0 & 0
        \\[1pt]
        0 & 0 & 1 & 0
        \\[1pt]
        0 & 0 & 1 & 1
        \\[1pt]
    \end{pmatrix}^{-1}
    = 
    \begin{pmatrix}
        \\[1pt]
        1 & 0 & 0 & 0
        \\[1pt]
        0 & 1 & 0 & 0
        \\[1pt]
        0 & 0 & 1 & 0
        \\[1pt]
        0 & 0 & 1 & 1
        \\[1pt]
    \end{pmatrix}
\end{equation}
and
\begin{equation}
    S_{3} R_{3} = 
    \begin{pmatrix}
        \\[1pt]
        1 & 0 & 0 & 1
        \\[1pt]
        0 & 1 & 1 & 0
        \\[1pt]
        0 & 1 & 0 & 0
        \\[1pt]
        1 & 0 & 0 & 0
        \\[1pt]
        \hline
        \\[1pt]
        1 & 0 & 0 & 0
        \\[1pt]
        0 & 1 & 0 & 0
        \\[1pt]
        0 & 0 & 1 & 0
        \\[1pt]
        0 & 0 & 0 & 1
        \\[1pt]
    \end{pmatrix}.
\end{equation}
Finally, we apply \textsc{Phase} to qubits 1 and 2 so that
\begin{equation}
    S_{4} = Q_{2} S_{3} R_{3} = 
    \begin{pmatrix}
        \\[1pt]
        0 & 0 & 0 & 1
        \\[1pt]
        0 & 0 & 1 & 0
        \\[1pt]
        0 & 1 & 0 & 0
        \\[1pt]
        1 & 0 & 0 & 0
        \\[1pt]
        \hline
        \\[1pt]
        1 & 0 & 0 & 0
        \\[1pt]
        0 & 1 & 0 & 0
        \\[1pt]
        0 & 0 & 1 & 0
        \\[1pt]
        0 & 0 & 0 & 1
        \\[1pt]
    \end{pmatrix},
\end{equation}
where
\begin{equation}
    Q_{2} = 
    \begin{pmatrix}[cccc|cccc]
        \\[1pt]
        1 & 0 & 0 & 0 & 1 & 0 & 0 & 0
        \\[1pt]
        0 & 1 & 0 & 0 & 0 & 1 & 0 & 0
        \\[1pt]
        0 & 0 & 1 & 0 & 0 & 0 & 0 & 0
        \\[1pt]
        0 & 0 & 0 & 1 & 0 & 0 & 0 & 0
        \\[1pt]
        \hline
        \\[1pt]
        0 & 0 & 0 & 0 & 1 & 0 & 0 & 0
        \\[1pt]
        0 & 0 & 0 & 0 & 0 & 1 & 0 & 0
        \\[1pt]
        0 & 0 & 0 & 0 & 0 & 0 & 1 & 0
        \\[1pt]
        0 & 0 & 0 & 0 & 0 & 0 & 0 & 1
        \\[1pt]
    \end{pmatrix}.
\end{equation}

$S_{4}$ is of the form of a graph state and represents the following Paulis:
\begin{equation}
\begin{aligned}
    \tilde{P}_{1} &= X_{1}\, I_{2}\, I_{3}\, Z_{4},\\
    \tilde{P}_{2} &= I_{1}\, X_{2}\, Z_{3}\, I_{4},\\
    \tilde{P}_{3} &= I_{1}\, Z_{2}\, X_{3}\, I_{4},\\
    \tilde{P}_{4} &= Z_{1}\, I_{2}\, I_{3}\, X_{4}.
\end{aligned}
\end{equation}
We now have
\begin{equation}
    \Sstart' = Q^{-1} S_{4} R^{-1},
\end{equation}
where
\begin{equation}
    Q^{-1} \coloneqq Q_{1}^{-1} Q_{2}^{-1} = 
    \begin{pmatrix}[cccc|cccc]
        \\[1pt]
        1 & 0 & 0 & 0 & 1 & 0 & 0 & 0
        \\[1pt]
        0 & 1 & 0 & 0 & 0 & 1 & 0 & 0
        \\[1pt]
        0 & 0 & 0 & 0 & 0 & 0 & 1 & 0
        \\[1pt]
        0 & 0 & 0 & 0 & 0 & 0 & 0 & 1
        \\[1pt]
        \hline
        \\[1pt]
        0 & 0 & 0 & 0 & 1 & 0 & 0 & 0
        \\[1pt]
        0 & 0 & 0 & 0 & 0 & 1 & 0 & 0
        \\[1pt]
        0 & 0 & 1 & 0 & 0 & 0 & 0 & 0
        \\[1pt]
        0 & 0 & 0 & 1 & 0 & 0 & 0 & 0
        \\[1pt]
    \end{pmatrix}
\end{equation}
and
\begin{equation}
    R^{-1} \coloneqq R_{3}^{-1} R_{2}^{-1} R_{1}^{-1} R_{0}^{-1} = 
    \begin{pmatrix}
        \\[1pt]
        0 & 1 & 1 & 0 & 1 & 1
        \\[1pt]
        0 & 1 & 1 & 1 & 0 & 0
        \\[1pt]
        1 & 1 & 0 & 0 & 0 & 1
        \\[1pt]
        1 & 1 & 0 & 0 & 0 & 1 
        \\[1pt]
    \end{pmatrix}.
\end{equation}

The rotation circuit is shown in Fig.~\ref{fig:examplecirc}. Using $Q^{-1}$, $S_{4}$ and $R^{-1}$, we can work out that the phases for the six original operators are $\left ( \begin{smallmatrix} +1 & +1 & +1 & -1 & -1 & -1 \end{smallmatrix} \right )$. Therefore, we can construct measurements of the original Pauli strings as follows:
\begin{itemize}
\renewcommand{\labelitemi}{\tiny$\blacksquare$}
    \item $P_{1}$ from product of measurements of qubits 3 and 4,
    \item $P_{2}$ from product of measurements of qubits 1 to 4,
    \item $P_{3}$ from product of measurements of qubits 1 and 2,
    \item $P_{4}$ from the negative of measurement of qubit 2,
    \item $P_{5}$ from the negative of measurement of qubit 1,
    \item $P_{6}$ from the negative product of measurements of qubits 1, 3 and 4.
\end{itemize}

\begin{figure}
\centering
\mbox{ 
\Qcircuit @C=1.5em @R=1.4em {\lstick{q1}  & \gate{S} & \ctrl{3}  & \qw & \gate{H} & \meter  & \cw \\
\lstick{q2} & \gate{S} & \qw & \ctrl{1} & \gate{H} & \meter  & \cw\\
\lstick{q3} & \gate{H} & \qw & \control \qw & \gate{H} & \meter  & \cw\\
\lstick{q4} & \gate{H} & \control \qw & \qw & \gate{H} & \meter  & \cw\\
}}
\centering{}\caption{{{}The rotation circuit $U$ for the \textsc{CZ}-construction walk-through example.}}
\label{fig:examplecirc}
\end{figure}

\section{Proof of \texorpdfstring{$O(kn/\log k)$}{O(kn/log k)}}\label{app:fourrussians}
We prove the following Claim~\ref{claim: scaling} via arguments of \citeauthor*{patel_markov_hayes_cnot}~\cite{patel_markov_hayes_cnot}. As acknowledged in Ref.~\cite{patel_markov_hayes_cnot}, these arguments originate from the ``Method of Four Russians''~\cite{four_russians}. Note that row operations correspond to \textsc{CNOT} gates, as explained in detail in Ref.~\cite{patel_markov_hayes_cnot}.

\begin{claim}\label{claim: scaling}
    Let $M$ be a $n\times n$ matrix with block form
\begin{equation}
    \begin{pmatrix}
        I_{k} & A
        \\[1pt]
        0 & I_{n-k}
    \end{pmatrix},
    \end{equation}
    where $A$ is any $k\times (n-k)$ matrix. Then $O(kn/\log k)$ row operations suffice to reduce $M$ to identity $I_{n}$.
\end{claim}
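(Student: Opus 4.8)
The plan is to reduce the statement to the problem of clearing the off-diagonal block $A$, and then to apply the Method of Four Russians following \citeauthor*{patel_markov_hayes_cnot}~\cite{patel_markov_hayes_cnot}. First I would exploit the block structure: because the bottom-left block of $M$ is zero and the bottom-right block is $I_{n-k}$, each row $k+j$ (for $j=1,\dots,n-k$) contains a single $1$, located in column $k+j$. Hence adding row $k+j$ into a top row $i\le k$ flips exactly the entry in position $(i,k+j)$ and disturbs neither the $I_k$ block nor the $I_{n-k}$ block. Clearing the nonzero entries of $A$ one by one therefore reduces $M$ to $I_n$ using one row operation per $1$ in $A$, i.e.\ at most $k(n-k)\le kn$ operations. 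This already gives the naive bound $O(kn)$; the entire content of the claim is to shave off the $\log k$ factor.

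Next I would batch the elimination. Partition the $n-k$ columns of $A$ (equivalently, the $n-k$ pivot rows) into $\lceil (n-k)/s\rceil$ consecutive blocks of width $s=\lfloor \log_2 k\rfloor$. Within a single block, each top row must receive some subset-sum of the $s$ pivot rows in that block in order to clear its $s$ entries there, and there are only $2^{s}\le k$ distinct such subset-sums. The Four Russians idea is to realise all of these combinations once, at a cost of $O(2^{s})=O(k)$ row operations, and then clear each of the $k$ top rows within that block using a single further operation, for a per-block cost of $O(2^{s}+k)=O(k)$. Summing over the $O((n-k)/\log k)$ blocks gives $O\!\left(k(n-k)/\log k\right)=O(kn/\log k)$, which is the claimed bound; crucially, the width $s\approx\log_2 k$ is precisely the value that balances the preprocessing cost $2^{s}$ against the number $k$ of target rows, and this is the source of the $\log k$ in the denominator.

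The hard part is carrying out this batching \emph{in place}, with no ancillary rows, while keeping all auxiliary operations subdominant — and this is exactly the delicate accounting supplied by Ref.~\cite{patel_markov_hayes_cnot}. The $2^{s}$ precomputed subset-sums cannot simply be stored in spare rows, so I would instead follow \citeauthor*{patel_markov_hayes_cnot}'s duplicate-pattern elimination: within each block, top rows sharing the same $s$-bit pattern are combined pairwise (zeroing that pattern in one of them), leaving at most $2^{s}\le k$ distinct patterns that are then cleared against the pivot rows. One must verify that these inter-row additions, together with any temporary corruption of the $I_k$ block that they introduce, are undone within the $O(k)$-per-block budget; since $M$ is unipotent and equals its own inverse over $\mathbb{F}_2$, any such sequence of elementary row operations that terminates at $I_n$ is admissible, so correctness is automatic and only the count is at issue. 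Finally I would record the edge case $k=O(1)$, where the statement is read through the naive $O(kn)$ bound, the formula $O(kn/\log k)$ being intended for the regime in which $\log k$ meaningfully divides the work; the resulting number of row operations, and hence of \textsc{CNOT} gates, is $O(kn/\log k)$.
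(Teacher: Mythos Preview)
Your approach is the same as the paper's: partition the columns of $A$ into blocks, within each block eliminate duplicate patterns among the top $k$ rows, then clear the surviving distinct patterns against the $I_{n-k}$ pivot rows. However, your specific accounting has a gap that prevents the $\log k$ saving from materialising.

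With $s=\lfloor\log_2 k\rfloor$, after duplicate elimination you are left with up to $2^{s}$ distinct nonzero $s$-bit patterns. Clearing each such pattern against the pivot rows costs up to $s$ row operations (one per set bit), so this step costs $\Theta(s\,2^{s})=\Theta(k\log k)$ per block, not $O(k)$. Your quoted per-block cost $O(2^{s}+k)$ was derived for the ancilla version (Gray-code precomputation followed by one addition per target row); it does not carry over to the in-place duplicate-elimination variant, where the clearing term is $s\,2^{s}$ rather than $2^{s}$. Summing $\Theta(k\log k)$ over $(n-k)/s$ blocks then gives only $O(k(n-k))$, i.e.\ the naive bound. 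The paper avoids this by taking the block width to be $m=\alpha\log_2 k$ with $\alpha<1$, so that $m\,2^{m}=O(k^{\alpha}\log k)=o(k)$ and the per-block cost is genuinely $O(k)$; the total is then $(k+m2^{m-1})(n-k)/m+O(k^2/\log k)=O(kn/\log k)$.

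A second point you leave vague is the fate of the $I_k$ block. The inter-row additions among the top $k$ rows do not get ``undone'' within each block; they accumulate, and at the end the top-left block is some invertible $k\times k$ matrix $B$. The paper handles this with an explicit final step: apply the Patel--Markov--Hayes result directly to $B$, reducing it to $I_k$ in $O(k^{2}/\log k)$ further row operations, which is absorbed into the overall $O(kn/\log k)$ budget since $k\le n$. Once you adjust the block width to $\alpha\log_2 k$ with $\alpha<1$ and add this final cleanup step, your argument coincides with the paper's proof.
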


\begin{proof}
    Let $m$ be a constant we later choose. Partition $A$ into $l \coloneqq (n-k)/m$ consecutive column-blocks $A_{i}$, each containing $m$ columns. 
    
    Start at $A_{1}$ and eliminate any duplicate rows using at most $k$ row operations. There then remain at most $2^{m}$ unique rows in $A_{1}$ which can be eliminated by at most $m2^{m-1}$ row operations that add rows from $I_{n-k}$. $A_{1}$ is now zero. 
    
    For each of $A_{2}, \dots, A_{l}$ perform the same operation as was done to $A_{1}$. $M$ then becomes
    \begin{equation}
        \begin{pmatrix}
        B & 0
        \\[1pt]
        0 & I_{n-k}
    \end{pmatrix},
    \end{equation}
    where $B$ is some $k\times k$ matrix that must be invertible. $B$ can then be row-reduced to $I_{k}$ using $O(k^{2}/\log k)$ by the result of Ref.~\cite{patel_markov_hayes_cnot}.
    
    The total number $N$ of row operations is therefore
    \begin{equation}
        N = \left(k+m2^{m-1}\right)\ \frac{n-k}{m} + O\left (\frac{k^{2}}{\log k}\right).
    \end{equation}

    Choosing $m = \alpha \log k$, we find
    \begin{equation}
        N = \frac{k(n-k)}{\alpha \log k} + \frac{k^{\alpha}(n-k)}{2} +  O\left(\frac{k^{2}}{\log k}\right),
    \end{equation}
    which is $O(kn/\log k)$ provided $\alpha <1$.
\end{proof}

\section{NP-hardness of maximising \texorpdfstring{$\hat{R}$}{R}}\label{app:rhat_nphard}
In this Appendix, we show that maximising $\hat{R}$ is NP-hard by a reduction argument.

Let $G = (V,E)$ be a simple graph with vertices $V\coloneqq\{1,2,\dots,n\}$ and edges $E$. Following Ref.~\cite[Appendix~A]{gokhale2019_commute}, we then define the operator $O = \sum_{i=1}^n P_i$ on $n$-qubits, where $P_i = \otimes _{j=1}^n P_i^{(j)}$ is a tensor product of $n$ single-qubit Paulis, so that
\begin{equation}
P_i^{(j)} \coloneqq \begin{cases}
   Z \quad \text{if $i=j$,}\\[2\jot]
   X \quad \text{if $j>i$ and $(i,j)\not\in E$,}\\[2\jot]
   I \quad \text{otherwise}.
\end{cases}
\end{equation}
It can be easily seen~\cite[Appendix~A]{gokhale2019_commute} that a commuting collection of the $P_i$s corresponds exactly to a clique in $G$.

For an arrangement $\mathcal{A}$ of $O$ into $N$ commuting collections with sizes $m_i\geq 1$, we consider the term appearing in
the denominator of $\hat{R}$, and see that
\begin{equation}
    D(\mathcal{A}) \coloneqq \sum_{i=1}^{N} \sqrt{\sum_{j=1}^{m_i}|a_{ij}|^2} = \sum_{i=1}^{N} \sqrt{m_i},
\end{equation}
where the second equality is because the coefficients of $P_i$s in $O$ are all $1$s and we do not consider splitting these coefficients. We have
\begin{equation}\label{eq:cauchy}
    N\leq D(\mathcal{A}) \leq \sqrt{n}\ \sqrt{N} \leq \sqrt{n} \ N,
\end{equation}
where the second inequality follows from Cauchy-Schwarz and $\sum_{i=1}^N m_i = n$.

Now, let $N_0$ denote the minimum number of commuting collections corresponding to some arrangement $\mathcal{A}_0$ and let $N_1$ denote the number of commuting collections in an arrangement $\mathcal{A}_1$ that maximises $\hat{R}$. By definition, $N_0\leq N_1$. Since maximising $\hat{R}$ is equivalent to minimising $D(\mathcal{A})$, we also have
\begin{equation}\label{eq:np-hard_approx}
    N_1 \leq D(\mathcal{A}_1) \leq  D(\mathcal{A}_0) \leq \sqrt{n} \ N_0,
\end{equation}
using Eq.~\eqref{eq:cauchy} for the first and third inequalities. Hence, $D(\mathcal{A}_1)$ is a $\sqrt{n}$-approximation to $N_0$. But $N_0$ is also the solution to the \textsc{Min-Clique-Cover} problem on $G$ by construction. So maximising $\hat{R}$ involves computing a $\sqrt{n}$-approximation to \textsc{Min-Clique-Cover} which is NP-hard by Ref.~\cite[Theorem 1.2]{zuckerman_2006}. 

In fact, Ref.~\cite[Theorem 1.2]{zuckerman_2006} gives the stronger result that computing a $n^{1-\epsilon}$ approximation to \textsc{Min-Clique-Cover} is NP-hard for any $\epsilon>0$. This implies that it is NP-hard to compute a $n^{1/2-\epsilon}$ approximation to $D(\mathcal{A}_1)$ for any $\epsilon>0$. In other words, it is NP-hard to compute a $1/n^{1/2-\epsilon}$ approximation to the maximum $\hat{R}$ for any $\epsilon>0$.

\onecolumngrid
\newpage
\section{Full numerical results}~\label{app:full_numerical_results}
In this Appendix, we present the full version of Table~\ref{tab:reddat} in the main text.
\begin{table}[hbt!]
\centering
\footnotesize
\renewcommand{\arraystretch}{1.38} %
\resizebox{17cm}{!}{%
\Large
\begin{tabular}{@{}C{1.5cm}R{4.5em}R{4.5em} R{0.3cm} R{3.5em}R{3.5em}R{3.5em} R{0.3cm} R{3.5em}R{3.5em}R{3.5em}R{3.5em} R{0.3cm} R{3.5em}R{3.5em}R{3.5em}
@{}}\toprule

\rule{0pt}{1.5em} \multirow{3}{*}{Molecule} & \hspace{1.3em}
\multirowcell{2}[-0.9em]{\hfill $n$\\ \hfill qubits} & \hspace{1.3em} \multirowcell{2}[-0.7em]{\hfill $t$\\ \hfill Paulis} &  &\multicolumn{3}{c}{Arrangement} & \hspace{1em} &\multicolumn{4}{c}{Ratios $R$, $\hat{R}$} & \hspace{1em} &\multicolumn{3}{c}{Rotation Circuit $2q$-size}\\
\hhline{~~~~---~----~---}\noalign{\vskip-0.5pt}
\rule{0pt}{1.5em} &  &  & \hspace{1em} & $N$ & $ \overline{m_{i}}$ & $\overline{k_{i}}$ & \hspace{1em} & $R$ min & $R$ mean & $R$ max & $\hat{R}$ & \hspace{1em} & max theory & max true & mean \\ 

\hhline{----------------}
H$_{2}$ & 2 & 4 & \hspace{1em}      & 2 & 2.00 & 1.50 & \hspace{1em} & 1.09 & 1.93 & 4.60 & 1.76 & \hspace{1em} & 0 & 0 & 0\\
H$_{3}^{+}$ & 4 & 59 & \hspace{1em} & 10 & 5.90 & 3.50 & \hspace{1em} & 3.76 & 11.92 & 33.04 & 10.25 & \hspace{1em} & 6 & 3 & 0.80\\
LiH & 10 & 630 & \hspace{1em}       & 41 & 15.37 & 6.85 & \hspace{1em} & 19.60 & 24.91 & 34.74 & 23.97 & \hspace{1em} & 45 & 18 & 5.29\\
OH$^{-}$ & 10 & 630 & \hspace{1em}  & 38 & 16.58 & 7.29 & \hspace{1em} & 6.32 & 8.90 & 12.86 & 8.51 & \hspace{1em} & 45 & 17 & 5.63\\
HF & 10 & 630 & \hspace{1em}        & 39 & 16.15 & 6.97 & \hspace{1em} & 6.07 & 8.57 & 12.27 & 8.21 & \hspace{1em} & 45 & 16 & 5.74\\
H$_{2}$O & 12 & 1085 & \hspace{1em} & 51 & 21.27 & 9.04 & \hspace{1em} & 7.68 & 11.27 & 16.96 & 10.67 & \hspace{1em} & 66 & 26 & 7.37\\
BH$_{3}$ & 14 & 1584 & \hspace{1em} & 66 & 24.00 & 10.36 & \hspace{1em} & 17.21 & 20.93 & 32.13 & 20.05 & \hspace{1em} & 91 & 26 & 9.56\\
NH$_{3}$ & 14 & 3608 & \hspace{1em} & 118 & 30.58 & 11.34 & \hspace{1em} & 12.65 & 15.96 & 26.93 & 15.31 & \hspace{1em} & 91 & 28 & 10.26\\
CH$_{4}$ & 16 & 3887 & \hspace{1em} & 123 & 31.60 & 13.39 & \hspace{1em} & 16.96 & 21.63 & 29.33 & 20.27 & \hspace{1em} & 120 & 45 & 16.75\\
O$_{2}$ & 18 & 2238 & \hspace{1em}  & 67 & 33.40 & 13.48 & \hspace{1em} & - & - & - & 20.23 & \hspace{1em} & 153 & 44 & 21.57\\
N$_{2}$ & 18 & 2950 & \hspace{1em}  & 78 & 37.82 & 13.91 & \hspace{1em} & - & - & - & 22.10 & \hspace{1em} & 153 & 53 & 20.42\\
CO & 18 & 4426 & \hspace{1em}       & 128 & 34.58 & 13.48 & \hspace{1em} & - & - & - & 21.31 & \hspace{1em} & 153 & 50 & 20.23\\
HCl & 18 & 4538 & \hspace{1em} & 123 & 36.89 & 13.87 & \hspace{1em} & - & - & - & 10.36 & \hspace{1em} & 153 & 49 & 20.35\\
NaH & 18 & 5850 & \hspace{1em} & 135 & 43.33 & 14.73 & \hspace{1em} & - & - & - & 12.90 &
\hspace{1em} & 153 & 45 & 21.44\\
H$_{2}$S & 20 & 6277 & \hspace{1em} & 147 & 42.70 & 16.06 & \hspace{1em} & - & - & - & 11.60 & \hspace{1em} & 190 & 58 & 25.98\\
PH$_{3}$ & 22 & 19746 & \hspace{1em} & 304 & 64.95 & 18.77 & \hspace{1em} & - & - & - & 13.05 & \hspace{1em} & 231 & 67 & 28.02\\
SiH$_{4}$ & 24 & 18713 & \hspace{1em} & 304 & 61.56 & 20.98 & \hspace{1em} & - & - & - & 13.94 & \hspace{1em} & 276 & 77 & 36.03\\
NaF & 26 & 16538 & \hspace{1em} & 287 & 57.62 & 20.44 & \hspace{1em} & - & - & - & 23.36 & \hspace{1em} & 325 & 90 & 42.28\\
LiCl & 26 & 17044 & \hspace{1em} & 292 & 58.37 & 20.46 & \hspace{1em} & - & - & - & 12.22 & \hspace{1em} & 325 & 89 & 39.42\\
KH & 26 & 24290 & \hspace{1em} & 325 & 74.74 & 22.30 & \hspace{1em} & - & - & - & 12.87 & \hspace{1em} & 325 & 115 & 45.18\\
CO$_{2}$ & 28 & 11429 & \hspace{1em} & 216 & 52.91 & 21.02 & \hspace{1em} & - & - & - & 38.47 & \hspace{1em} & 378 & 104 & 44.51\\
F$_{2}$O & 28 & 20541 & \hspace{1em} & 317 & 64.80 & 22.83 & \hspace{1em} & - & - & - & 36.82 & \hspace{1em} & 378 & 105 & 46.12\\
NO$_{2}$ & 28 & 20549 & \hspace{1em} & 311 & 66.07 & 22.93 & \hspace{1em} & - & - & - & 40.69 & \hspace{1em} & 378 & 109 & 46.25\\
Cl$_{2}$ & 34 & 34334 & \hspace{1em} & 378 & 90.83 & 28.09 & \hspace{1em} & - & - & - & 26.58 & \hspace{1em} & 561 & 156 & 73.24\\
NaCl & 34 & 42826 & \hspace{1em} & 498 & 86.00 & 28.54 & \hspace{1em} & - & - & - & 20.46 & \hspace{1em} & 561 & 166 & 74.34\\
SF$_{2}$ & 36 & 56025 & \hspace{1em} & 567 & 98.81 & 31.96 & \hspace{1em} & - & - & - & 30.65 & \hspace{1em} & 630 & 180 & 78.67\\
HBr & 36 & 62589 & \hspace{1em} & 602 & 103.97 & 31.88 & \hspace{1em} & - & - & - & 16.03 & \hspace{1em} & 630 & 154 & 78.98\\
SO$_{2}$ & 36 & 75315 & \hspace{1em} & 691 & 108.99 & 33.21 & \hspace{1em} & - & - & - & 29.75 & \hspace{1em} & 630 & 187 & 66.90\\
NO$_{3}^{-}$ & 38 & 61132 & \hspace{1em} & 622 & 92.28 & 31.23 & \hspace{1em} & - & - & - & 65.01 & \hspace{1em} & 703 & 182 & 86.40\\
H$_{2}$Se & 38 & 69684 & \hspace{1em} & 631 & 110.43 & 33.91 & \hspace{1em} & - & - & - & 16.49 & \hspace{1em} & 703 & 196 & 86.88\\
\toprule
\end{tabular}}
\caption{{The full set of results of the numerical simulations discussed in the main text. The molecular geometry is approximately that of the equilibrium configuration. A number of results related to the collecting of Hamiltonian terms, how the arrangement reduces the number of measurements required, and the number of two-qubit gates in the resulting rotation circuits are shown. We used OpenFermion~\cite{openfermion} and Psi4~\cite{parrish_psi4_2017} to obtain Hamiltonians in the STO-3G  basis and under the symmetry conserving Bravyi-Kitaev transformation~\cite{bravyi_kitaev2002, bksc_fermion}. Using our collecting method, \textsc{Sorted Insertion}, the number of collections, $N$, the average number of terms per collection, $\overline{m_{i}}$, and the average rank of the collections, $\overline{k_{i}}$ are shown, for molecules with $n$ qubits and $t$ Pauli operators, excluding the identity, in the Hamiltonian sum.
For the smallest nine molecules, we calculated the resulting ratio $R$, as given in Eq.~\eqref{eq:rratio}, for 100 randomly selected trial states, prepared by choosing random sets of parameters for a hardware efficient ansatz preparation circuit of depth 1. For each molecule, we show the mean, minimum and maximum values of $R$ obtained from the 100 runs. We also show the value of $\hat{R}$, given by Eq.~\eqref{eq:rhatratio}, obtained, which is close to the mean value of $R$ where this has been calculated.
A key result of interest is the maximum number of two-qubit gates required to obtain a measurement of all the operators in a given Hamiltonian. We show the theoretical maximum, given by the largest value of $kn -\frac{1}{2}k(k+1)$ for any collection, and the true largest value for any collection. The ratio of these two numbers is shown in Fig.~\ref{fig:plots}(c). We also show the mean number of two-qubit gates required in a rotation circuit, averaged across all collections for a given molecule.}}
\label{tab:fulldat}
\end{table}
\end{document}